\newcommand{\R}{\mathbb{R}}
\newcommand{\An}{\mathcal{A}_n}
\DeclareMathOperator{\mean}{mean}
\renewcommand{\b}{\mathbf}
\newcommand{\matlab}{{\sc MATLAB~}}
\newtheorem{theorem}{Theorem}[section]
\newtheorem{lemma}{Lemma}[theorem]
\newtheorem{definition}{Definition}[theorem]
\newtheorem{corollary}{Corollary}[theorem]
\definecolor{codegreen}{rgb}{0,0.6,0}
\definecolor{codegray}{rgb}{0.5,0.5,0.5}
\definecolor{codepurple}{rgb}{0.58,0,0.82}
\definecolor{backcolour}{rgb}{1,1,1} 
\lstdefinestyle{mycodestyle}{
    backgroundcolor=\color{backcolour},   
    commentstyle=\color{codegreen},
    keywordstyle=\color{magenta},
    numberstyle=\tiny\color{codegray},
    stringstyle=\color{codepurple},
    basicstyle=\ttfamily\footnotesize,
    breakatwhitespace=false,         
    breaklines=true,                 
    captionpos=b,                    
    keepspaces=true,                 
    numbers=none,                    
    numbersep=15pt,                  
    showspaces=false,                
    showstringspaces=false,
    showtabs=false,                  
    tabsize=2
}
\title{Generating large scale-free networks
with the Chung--Lu random graph model\thanks{This is the peer reviewed version of the article by Dario Fasino, Arianna Tonetto, and Francesco Tudisco \textit{``Generating large scale-free networks with the Chung–Lu
random graph model''}, which has been published in final form at \url{http://dx.doi.org/10.1002/net.22012}. This article may be used for non-commercial
purposes in accordance with Wiley Terms and Conditions for Use of Self-Archived Versions \url{http://www.wileyauthors.com/self-archiving}.}}
\author{%
\textbf{Dario Fasino, \ Arianna Tonetto} \\
Department of Mathematics, Computer Science and Physics,\\
University of Udine,\\
33100 Udine, Italy\\
\texttt{dario.fasino@uniud.it} 
\And
\textbf{Francesco Tudisco}   \\
  School of Mathematics\\
  Gran Sasso Science Institute\\
  67100, L'Aquila (Italy) \\
  \texttt{francesco.tudisco@gssi.it} 
}
\begin{document}

\maketitle

\begin{abstract}
Random graph models 
are a recurring tool-of-the-trade for studying 
network structural properties and benchmarking community detection  and other network algorithms.
Moreover, they serve as test-bed generators for studying
diffusion and routing processes on networks. 
In this paper, we illustrate how to generate large random graphs having a power-law 
degree distribution using the Chung--Lu model.
In particular, we are concerned with the fulfillment of a fundamental  hypothesis
that must be placed on the model parameters,
without which the generated graphs lose all the theoretical properties of the model,
notably, the controllability of the expected node degrees and the absence of 
correlations between the degrees of two nodes joined by an edge.
We provide explicit formulas for the model parameters
to generate random graphs that 
have several desirable properties, including a power-law degree distribution with any 
exponent larger than $2$,
a prescribed asymptotic behavior of the largest and average expected degrees, and the presence of a giant component.

\medskip

\textbf{Keywords:} Scale-free networks; random graphs; generative models; Chung--Lu model; giant component
\end{abstract}


\section{Introduction}

A problem of central importance in network analysis is to be able to produce random graphs that resemble certain fundamental structural properties that emerge from the empirical observation of real-world networks. 
This problem is not only theoretically interesting, but also of practical relevance.

Random graph models are  useful tools 
for studying dynamical processes on networks, such as rumor or epidemic spreading on social networks \cite{BERTOTTI201655,MR1919737,Epidemics}. Moreover, they are widely used as test-bed data generators resembling desired properties from real-world sparse networks for benchmarking network optimization algorithms, including 
decentralized search and routing algorithms \cite{MR2275717}, as well as 
methods for clustering and partitioning \cite{fasino2018expected,mercado2016clustering}, and for the identification of mesoscale network structures such as communities \cite{bazzi2020framework,KaNe11} or core-periphery \cite{core-periphery,FRsymmetry}.

Another practical task where random graph generators are very helpful is to assess whether an emergent feature, e.g., a pattern or a quantitative property of a real-world network, is likely to be the result of self-organization or just a  chance.
In that case, a common practice is inspired by the statistical hypothesis testing,  where one compares measurements on real-world data with data sampled from a suitably chosen  random network generator. 
For example,  \cite{MR2736090}
presents an extensive 
and detailed picture of the emergence of community structures in large social and information networks, where empirical results are compared with both analytical results and stochastic simulations on a wide range of commonly used network models.  
Analogous comparisons between empirical data and statistical analysis have been carried out to uncover the non-random occurrence of certain small subgraphs, called motifs, as basic building blocks of many complex networks \cite{Milo2002Motifs}.

In this work we propose and analyze a random graph generator for graphs having a power-law degree distribution and a prescribed expected degree sequence.
A graph or network is said 
to have a power-law degree distribution with exponent $\gamma$ if the number $n_k$ of nodes having degree $k$ can be approximated by $\alpha k^{-\gamma}$, for some coefficient $\alpha$ that 
depends on the size of the graph.
Networks having a power-law degree distribution are called {\em scale-free}, due to the fact that 
the power law $P(x) = \alpha x^{-\gamma}$ 
fulfills the identity 
$P(c x) = C P(x)$, for some constant $C$ depending on $c$ but independent on $x$, 
so that the 
functional form 
remains unchanged under rescaling of the independent variable, apart from a multiplicative factor.

Power-law degree distributions have been observed in many real systems such as the Internet and the World-Wide-Web, food-webs, protein, and neural networks. In fact,
starting from the seminal work by Barab\'asi and Albert \cite{BA1}, a wealth of empirical studies has shown that 
the node degrees of many interesting real-world networks
follow a power-law distributions with exponent $2 < \gamma < 3$
\cite{BA2,PowLawSIREV10,FFF99,NeSIREV03}.
In these networks a small but not negligible fraction
of the nodes has very large degree. These nodes are called hubs.
On the other hand,   real-world networks 
are usually  {\em sparse}, that is,
the average degree is much smaller than the 
size of the network. Furthermore,  
even though networks  often undergo an evolution, growing in  size during time due to the addition of new nodes and edges, 
the average degree remains roughly constant, 
see e.g.,  
\cite{FFF99}. 
Both empirical and analytical considerations on the 
power law show that only when the exponent $\gamma$ lies in between $2$ and $3$
a growing network can have hubs and be sparse at the same time. 
More precisely, if we let the network size grow unbounded, depending on the value of $\gamma$ we have different behaviors: if $0<\gamma < 2$ then the average degree diverges and the network cannot be sparse, while if $\gamma>3$ then the degree variance is bounded and no large hub can appear \cite[\S 4.4]{BA2}.

The exponent $\gamma$ has a far-reaching impact also on spectral and  structural properties of scale-free networks. For example, 
it is known that when $\gamma > 5/2$ the principal eigenvector of the adjacency matrix of a scale-free network can
be localized, that is, a large fraction of its weight can be concentrated in a few entries while the remaining ones have zero (or close to zero) values \cite{EigLocalization}.
Localization effects can significantly diminish the effectiveness of spectral methods to quantify the importance of vertices and to detect communities in networks.
Moreover, if $\gamma > 7/3$ then the clustering coefficient, 
which is a measure of the density of triangles in a network, tends to zero as the network becomes large \cite{NeSIREV03}.
The incidence of the exponent $\gamma$ on the behaviour of an innovation diffusion model on scale-free networks has been analyzed in
\cite{BERTOTTI201655}, using both analytical and simulation techniques.

One of the earliest and most used generative models for scale-free networks is the Barab\'asi--Albert model \cite{BA1}. 
This model generates graphs that evolve in time 
and is easily described by a simple node-level self-organization rule, the so-called preferential attachment. The generation process is initiated from a  small subgraph, whose precise structure is asymptotically not influential on the final degree distribution. At each step of the generation, a new node is added to the network and is connected to $k$ pre-existing nodes, where $k$ is a fixed integer. 
Such $k$ nodes are chosen with a probability that is proportional to their current degree.  
With this rule, the degree distribution asymptotically follows a power law with exponent $\gamma = 3$, where the average degree is $k$ and the largest degree grows as $\mathcal{O}(\sqrt{n})$ on average, for a network with $n$ nodes.

Many other generative models for random scale-free networks have been developed since then. For example, 
the original preferential attachment model has been generalized along many directions, including the introducing of node deletion \cite{MoGhNe06}, node attractivity \cite{DoMeSa00}
and more general attachment rules \cite{krapivsky2,krapivsky}. Notably, the generative model introduced by Dorogovtsev, Mendes and Samukhin
in \cite{DoMeSa00} depends on a parameter that can be tuned to adjust the 
asymptotic 
degree distribution into 
any power law with exponent $\gamma \geq 2$. These models provide a justification for the emergence of power-law degree distributions for  preferential attachment growth processes. Moreover, 
they allow us to predict the behavior of a number of quantitative properties of large scale-free networks, and can be used as a baseline to detect 
deviations from this paradigm in real-world networks.
For example, important conclusions 
on the presence of statistical correlations between degrees of neighboring nodes 
in power-law graphs are shown in \cite{LitHof13} and rely on computational experiments with both random and real-world scale-free networks.
However, 
no generative model fits the ever-changing needs of  network analysis. 
Some models may introduce (or miss) certain structural properties as degree correlations,
node clustering or the occurrence of certain subgraphs. Thus, depending on the application and the context, one model can be preferred over another.

In this work, we focus on a random graph model originally proposed by Chung and Lu in \cite{CL02,chung,CL03}
and further thoroughly analyzed in \cite{chunglubook}. This model, which we refer to as the Chung--Lu model, 
is very flexible and conceptually very simple. 
The model depends on a parameter vector $\b w$ whose elements $w_1,\ldots, w_n$ set out  the node degrees in expectation, under suitable hypotheses. 
In other words, given a sequence $w_1,\ldots, w_n$, the model generates random graphs with $n$ nodes whose expected degrees  are exactly $w_1,\ldots,w_n$.
Our main goal is to explore to what extent large scale-free networks can be generated by means of the Chung--Lu random graph model. 
In particular, we show that, with a suitable choice of the  parameters defining the model, one can generate random graphs that have a number of desired properties. 
For example, they can have expected degrees having a power-law distribution with a prescribed exponent $\gamma$,
they can have specified average and largest expected degrees, allowing for graphs that are both sparse and have hubs, and they can have a giant component, i.e., a connected subgraph with a number of nodes that scales linearly with the number of nodes of the entire network. 
Notably, unlike the Barab\'asi--Albert model and its many variants, 
the generation of scale-free random graphs in the Chung--Lu model does not proceed incrementally, by adding one node after another, and the power-law degree distribution does not arise from a preferential attachment criterion.

The paper is organized as follows. In the next subsection we collect some preliminaries. 
Then, in Section \ref{sec:chunglu}  we present the main features of the Chung--Lu model.
In  Section 3  we discuss the main results of this work. 
First, we prove that a naive selection of the vector $\b w$ is unable to respect a basic assumption of the model when the size of the graph becomes large.
Then, we provide explicit expressions for parameters that allow us to generate sequences of power-law graphs with exponent $\gamma > 2$  
having a prescribed average degree, in expectation.
As a consequence, we demonstrate the possibility of generating large scale-free graphs having almost surely a connected component comprising a significant fraction of all nodes.
Finally, in Section \ref{sec:experiments}
we propose a simple and efficient procedure to generate random graphs from this model. We provide the \matlab  code of the graph generator and showcase its computing time performance. 
With the help of that generator, we perform various experiments to illustrate the theoretical results.

\subsection{Notations and basic results}

We use standard graph theoretical notations and definitions, cf.\ \cite{BA2,chunglubook}. 
A graph or network $G$ consists of a finite set $V$ of nodes or vertices and 
a set $E$ of edges. For notational simplicity, we identify $V$ with $\{1,2,\ldots,n\}$. Each edge $e\in E$ is an unordered pair of nodes.
If $e = \{i,j\}$ then we say that nodes $i$ and $j$ are adjacent,
and that $e$ is incident to $i$ (and also to $j$). An edge having the form $\{i,i\}$ is called loop. 
The adjacency matrix of $G$ is the $n\times n$ matrix $A$ such that $A_{ij} = 1$ iff $\{i,j\}\in E$ and $A_{ij} = 0$ otherwise.
The degree of a node $i\in V$ is the number of edges incident to $i$, denoted by $d_i$.
A loop contributes as one edge to the node degree.
If $d_i = 0$ then we say that node $i$ is isolated.
The volume of a subset $S\subseteq V$ is the number $\sum_{i\in S} d_i$.
For a vector $\b w=(w_1,\dots,w_n)$ we write $\mean(\b w)$ and $\mean_2(\b w)$ to denote its first and  second order means, $\mean(\b w) = \frac 1 n \sum_{i=1}^n w_i$ and $\mean_2(\b w) = \sum_{i=1}^n w_i^2/\sum_{i=1}^n w_i$, respectively.

In the sequel, we will repeatedly use the following elementary result:
If $f$ is a non-increasing function, then 
\begin{equation}   \label{eq:bounds}
    \int_{i_0+1}^{n+i_0+1}f(x)\,\mathrm{d} x \leq
 \sum_{i=1}^{n}f(i_0+i)
    \leq f(i_0+1) + \int_{i_0+1}^{n+i_0}f(x)\,\mathrm{d} x .
\end{equation}
Finally, the notation $f(n)\approx g(n)$ means that $f(n)/ g(n)\to 1$ as $n\to\infty$.

\section{The Chung--Lu random graph model}  
\label{sec:chunglu}

The Chung--Lu random graph model 
is one of the most widespread random graph models.
A detailed description and analysis of such model is presented in \cite{chunglubook}, which collects  important earlier
results concerning the number and size of connected
components \cite{CL02,CL06}, average distance and diameter \cite{chungdiam,chung}, 
and spectral properties of the associated adjacency matrix \cite{CL03}.
An independent appearance of the same model can be found in \cite{PrHi06}, where it has been introduced to analyze connectivities in protein-protein interaction networks. 
In \cite{ACL01} Aiello, Chung and Lu described an earlier random graph model for power law degree distributions which is equivalent asymptotically to the Chung--Lu model.

The Chung--Lu model is usually denoted by $G(\b w)$ where $\b w = (w_1,\ldots,w_n)^T$ is a vector of 
nonnegative real numbers which define the model. 
We can think of $w_i$ as a weight placed on node $i$ that determines its ability to generate edges. Indeed, 
a random graph $G\in G(\b w)$ is a graph with $n$ nodes, whose 
edges are generated independently from one another according to the following rule:
For $i,j = 1,\ldots,n$,
the probability of having an edge between nodes $i$ and $j$ is
\begin{equation}   \label{eq:defpij}
   p_{ij} = \frac{w_iw_j}{\sigma} , \qquad
   \sigma = \sum_{k=1}^n w_k .
\end{equation}
For mathematical convenience, loops are usually allowed. Hence, the expected number of edges 
incident to node $i$, that is, the expected degree of node $i$ is
\begin{equation}   \label{eq:pij}
   \sum_{j=1}^n p_{ij} = \frac{w_i}{\sigma} \sum_{j=1}^n w_{j} = w_i .
\end{equation}
In other words,  the expected degree of each node in a graph $G\in G(\b w)$ is equal to the corresponding coefficient in the vector $\b w$. 
As noted in \cite{Ne06}, the Chung--Lu model is the only random graph model 
where the probability of having an edge between nodes $i$ and $j$ is the product $g(w_i)g(w_j)$ of separate functions of the expected degrees of nodes $i$ and $j$.
As a consequence, this is the only random graph model that
does not introduce correlations among the degrees of the nodes joined by an edge. 
For that reason, the Chung--Lu model represents the fundamental ``null model'' for finding community structures in networks
\cite{FTsimax14,FTlaa18,Ne06}, since tightly interconnected node sets can be revealed as deviations from an uncorrelated random graph. 
Recall that
the existence of nontrivial correlations among such degrees in Barab\'asi--Albert networks is well known, see e.g.,
\cite{krapivsky2,LitHof13} or
\cite[Sect.\ 7.2]{NeSIREV03}.

Finally, we mention that the Chung--Lu model is the basic building block of more recent and advanced random graph models, such as the degree corrected stochastic block model
\cite{fasino2018expected,KaNe11}
and the block two-level Erd\H{o}s--R\'enyi model \cite{BTER}, which have been proposed as models for random graphs having a prescribed average degree distribution and
integrating clustering effects and community structures that appear in social networks.

\subsection{Admissible expected degree sequences}

For certain vectors $\b w$ the number $p_{ij}$ 
defined in \eqref{eq:defpij} can exceed $1$. 
In order to preserve its probabilistic meaning,
this issue is usually avoided by specifying the constraint
$\max_{i=1\ldots n} w_i^2 \leq \sigma$
on $\b w$, in such a way that $0< p_{ij}\leq 1$.
Other choices are possible; for example, various authors set $p_{ij}$ as $\min\{w_iw_j/\sigma,1\}$ 
or $w_iw_i/(\sigma + w_iw_j)$, see e.g., \cite{MiHa11}
or \cite[\S 6.2]{RGCN},
but in that case the identity  \eqref{eq:pij} is no longer valid
and all the interesting theoretical properties of the model are lost.
For this reason, in the present work we adopt the following definition.

\begin{definition}
Let $\b w\in\R^n$. We say that $\b w$ is {\em admissible}
if $w_1\geq w_2 \geq\ldots\geq w_n \geq 0$ and
$$
   w_1^2 \leq \sum_{k=1}^n w_k .
$$
Moreover, we denote by $\An$ the set of all admissible $n$-vectors.
\end{definition}

We point out that our use of the term ``admissible'' is different from that in \cite{chung}.

\section{Scale-free random graphs in the Chung--Lu model}

Owing to the pervasiveness of scale-free networks in the real world, 
and the fact that the Chung--Lu model allows us to choose in advance the expected degree distribution of random graphs, it is natural to ask whether or not it is possible to generate large 
power-law networks with arbitrary exponent from $G(\b w)$, by a suitable choice of the parameter $\b w\in\An$.
The main goal of this section is to answer that question.
In particular, we address the possibility of generating large graphs having prescribed statistical properties, such as the exponent of the power law and the average degree. 

Let $n_k\approx \alpha k^{-\gamma}$ be the degree 
profile
of a scale-free network $G$ with $n$ vertices and $\gamma > 1$,
that is, $n_k$ is the number of nodes having degree $k \geq 1$.
Then, for large $n$,
the number $N(k)$ of nodes with degree greater than or equal to $k$ 
can be approximated by 
\begin{equation}   \label{eq:n_k}
   N(k) = \sum_{i=k}^\infty n_i
   \approx \alpha\int_{k}^{+\infty}x^{-\gamma}\,\mathrm{d} x
   = \frac{\alpha}{\gamma-1}k^{1-\gamma} .
\end{equation}
For further reference, we note incidentally that the largest degree $d_{\max}$
for nodes in $G$
can be estimated by imposing $N(d_{\max}) = 1$, giving
$$ 
   d_{\max} \approx  
   \bigg( \frac{\alpha}{\gamma-1} \bigg)^{1/(\gamma-1)} .
$$ 
Moreover, assuming that the smallest degree in $G$ is $d_{\min}$ we have
$n = N(d_{\min}) \approx \alpha d_{\min}^{1-\gamma}/(\gamma-1)$, hence
\begin{equation}   \label{eq:kmax}
   d_{\max} \approx \bigg( \frac{n}{d_{\min}^{1-\gamma}} \bigg)^{1/(\gamma-1)} =
   d_{\min}\, n^{1/(\gamma-1)} .
\end{equation}
On the other hand, if $G\in G(\b w)$ with $\b w\in\An$ then it is reasonable to assume $N(w_i) \approx i$, since the expected degree of nodes $1,\ldots,i$ is 
greater than or equal to $w_i$.
Then, solving for $w_i$ the approximate identity
$i\approx \alpha w_i^{1-\gamma}/(\gamma - 1)$,   coming from \eqref{eq:n_k}, we obtain
\begin{equation}   \label{powerlaw}
   w_{i} \approx c
   i^{-\frac{1}{\gamma-1}} ,
   \qquad c = \biggl(
   \frac{\gamma-1}{\alpha}\biggr)^{\frac{1}{\gamma-1}} .
\end{equation}
This construction will be somewhat extended in the following result,
which is based on an idea found in \cite{chung}
and \cite[\S 5.7]{chunglubook}. 
Basically, a new parameter is introduced to shift the index $i$. 
As we will see later on, the presence of that parameter is very useful to overcome some limitations on the structure of the networks arising from \eqref{powerlaw}.
At the same time, that parameter may affect the number of nodes having small degree, resulting in a degree profile that partially deviates from the estimate \eqref{eq:kmax}. We will illustrate and discuss an example of this phenomenon in 
Section \ref{sec:experiments} (Figure \ref{fig:n_k_3}).

\begin{lemma}   \label{cl1}
Let $n\in \mathbb{N}$ and $\gamma>1$. Let $\b w\in\R^n$ be a vector such that 
\begin{equation}\label{powerlawattesa}
   w_i = c(i_0+i)^{-\frac{1}{\gamma -1}} ,
   \qquad i = 1,\ldots, n,
\end{equation} 
for some positive constant $c$ and $i_0 > -1$. If $\b w\in\An$ then a graph $G\in G(\b w)$ has an expected degree distribution that follows a power law with exponent $\gamma$. Namely, for $k\geq w_n$, the number of nodes with expected degree $k$ is approximately $\alpha k^{-\gamma}$ with $\alpha = (\gamma-1)c^{\gamma-1}$.
\end{lemma}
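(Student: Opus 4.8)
The plan is to reduce the claim to an elementary computation with the (deterministic) expected-degree sequence and its counting function. First I would recall that, by the defining property \eqref{eq:pij} of the Chung--Lu model, every $G\in G(\b w)$ has expected degree sequence equal to $\b w$ itself; the admissibility hypothesis $\b w\in\An$ enters exactly here, since it guarantees $0<p_{ij}\le 1$, so that $G(\b w)$ is a well-defined random graph and \eqref{eq:pij} applies. Consequently ``the number of nodes with expected degree $k$'' is a statement purely about the explicit sequence \eqref{powerlawattesa}, and the whole assertion becomes deterministic.

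Next I would introduce the counting function $N(k)=\#\{\,i\in\{1,\dots,n\}:w_i\ge k\,\}$, in analogy with \eqref{eq:n_k}. Since $i\mapsto w_i$ is non-increasing, $w_i\ge k$ is equivalent to $c(i_0+i)^{-1/(\gamma-1)}\ge k$, i.e., after raising both sides to the negative power $-(\gamma-1)$, to $i_0+i\le (c/k)^{\gamma-1}$. Hence, for every $k$ with $k\ge w_n$ — the range in which the count has not yet saturated at $n$, which is precisely why the statement is restricted to $k\ge w_n$ — one obtains the closed form $N(k)=\big\lfloor\, c^{\gamma-1}k^{1-\gamma}-i_0\,\big\rfloor$. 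Comparing this with the generic relation $N(k)\approx\frac{\alpha}{\gamma-1}k^{1-\gamma}$ obtained in \eqref{eq:n_k} for a power law of exponent $\gamma$, I would read off that the exponent is indeed $\gamma$ and that the leading coefficient satisfies $\frac{\alpha}{\gamma-1}=c^{\gamma-1}$, i.e. $\alpha=(\gamma-1)c^{\gamma-1}$. This is the rigorous underpinning of the heuristic \eqref{powerlaw}.

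To produce the $k^{-\gamma}$ form directly, I would then difference: the number of nodes with expected degree in the unit window at $k$ is $n_k=N(k)-N(k+1)$, which, up to the two floors, equals $c^{\gamma-1}\big(k^{1-\gamma}-(k+1)^{1-\gamma}\big)$. Applying the mean value theorem to $x\mapsto x^{1-\gamma}$ on $[k,k+1]$ turns this into $(\gamma-1)c^{\gamma-1}\xi^{-\gamma}$ for some $\xi\in(k,k+1)$, and since $\xi/k\to1$ as $k\to\infty$ this gives $n_k\approx\alpha k^{-\gamma}$ with $\alpha=(\gamma-1)c^{\gamma-1}$, as claimed. (Equivalently one may simply invoke the density relation $n_k\approx-N'(k)$ already used to derive \eqref{eq:n_k}.)

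The only delicate point — what I expect to be the main obstacle — is making precise the sense in which the rounding implicit in $N(k)$ and the shift $i_0$ are harmless. Both contribute merely an $O(1)$ additive perturbation to $N(k)$, hence alter neither the exponent nor the coefficient of the leading term $\frac{\alpha}{\gamma-1}k^{1-\gamma}$; but, precisely because $N(k)$ itself is $O(1)$ for $k$ near the top of the range, the constant $i_0$ does visibly distort the count of the few nodes of largest expected degree (and, symmetrically, of the smallest), so the conclusion must be read for the bulk of the degree range in the same continuous-density sense as \eqref{eq:n_k}. This is exactly the deviation from the clean estimate \eqref{eq:kmax} announced before the lemma and illustrated later in Figure \ref{fig:n_k_3}. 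Once this convention is fixed, no further estimates are required.
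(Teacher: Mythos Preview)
Your proposal is correct and follows essentially the same route as the paper: invoke \eqref{eq:pij} so that the expected degrees are the $w_i$, invert \eqref{powerlawattesa} to obtain $N(k)=(k/c)^{1-\gamma}-i_0$ (you add the floor, the paper does not), and then difference $n_k=N(k)-N(k+1)$ and apply the mean value theorem to get $n_k\approx(\gamma-1)c^{\gamma-1}k^{-\gamma}$. Your additional remarks on the role of admissibility and on the $O(1)$ effect of $i_0$ and the rounding are sound elaborations, but the core argument is identical to the paper's.
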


\begin{proof} 
As shown in \eqref{eq:pij}, the expected degree of the $i$-th vertex of $G$
is $w_i$. For $x \geq w_n$ let $N(x)$ be the number of nodes with expected degree greater than or equal to $x$.
Since the sequence \eqref{powerlawattesa} is strictly decreasing, we have $N(w_i) = i$.
Inverting the relation \eqref{powerlawattesa} we obtain 
$$
   i = \bigg(\frac{w_i}{c}\bigg)^{1-\gamma} - i_0 .
$$
Hence, for $k = 1,\ldots,n$ it holds
$$
   N(k) = \bigg(\frac{k}{c}\bigg)^{1-\gamma}-i_0 .
$$
Consequently, the number of nodes with expected degree $k$
is approximately
$$
   n_k = N(k) - N(k+1)
   = \frac{k^{1-\gamma} - (k+1)^{1-\gamma}}{c^{1-\gamma}} \approx 
   \frac{\gamma - 1}{c^{1-\gamma}} k^{-\gamma} ,
$$
where the last passage comes from the mean value theorem.
\end{proof}

The foregoing lemma does not ensure that 
the vector $\b w$ in \eqref{powerlawattesa} belongs to $\An$.
That condition can be met 
by a suitable choice of the constant $c$, as shown in the following result in the simplest case $i_0 = 0$.

\begin{theorem}   \label{esistenza} 
For all $\gamma > 1$ and for all $n\in\mathbb{N}$ let 
$w_i = ci^{-p}$ for $i = 1,\ldots, n$,
where $p = 1/(\gamma - 1)$, $0< c \leq c_{\max}$ and
$$
   c_{\max} = \begin{cases} 
   (1-(n+1)^{1-p})/(p-1) & \gamma \neq 2 \\
   \log(n+1) & \gamma = 2 . \end{cases}
$$
Then $\b w\in\An$.
\end{theorem}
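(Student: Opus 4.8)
The plan is to verify directly the two conditions in the definition of $\An$. Monotonicity and nonnegativity of $\b w$ are immediate: since $\gamma > 1$ we have $p = 1/(\gamma-1) > 0$, so $i \mapsto ci^{-p}$ is positive and strictly decreasing for $c > 0$, and in particular $w_1\geq\dots\geq w_n\geq 0$. Hence the whole content of the theorem is the inequality $w_1^2 \leq \sigma$, where $\sigma = \sum_{k=1}^n w_k$ as in \eqref{eq:defpij}.

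First I would rewrite that inequality in a scale-free form. Since $w_1 = c$ and $\sigma = c\sum_{k=1}^n k^{-p}$, dividing by $c > 0$ shows that $w_1^2 \leq \sigma$ is equivalent to
$$
   c \leq \sum_{k=1}^n k^{-p}.
$$
Thus it suffices to prove that $c_{\max} \leq \sum_{k=1}^n k^{-p}$, since the hypothesis gives $c \leq c_{\max}$.

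Next I would apply the elementary bound \eqref{eq:bounds} to the non-increasing function $f(x) = x^{-p}$ with $i_0 = 0$. Its lower half gives
$$
   \sum_{k=1}^n k^{-p} \geq \int_1^{n+1} x^{-p}\,\mathrm{d} x .
$$
Evaluating the integral, and distinguishing the case $p \neq 1$ (that is, $\gamma \neq 2$), where $\int_1^{n+1} x^{-p}\,\mathrm{d} x = \bigl(1 - (n+1)^{1-p}\bigr)/(p-1)$, from the case $p = 1$ (that is, $\gamma = 2$), where the integral equals $\log(n+1)$, one recognizes exactly the quantity $c_{\max}$ in the statement. Chaining the two displays yields $c \leq c_{\max} \leq \sum_{k=1}^n k^{-p}$, which is what we needed, so $\b w\in\An$.

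There is essentially no serious obstacle here; the only points requiring a little care are checking that $c_{\max}$ is indeed positive in every regime (for $1 < \gamma < 2$ one has $p > 1$ and both the numerator and the denominator of $(1-(n+1)^{1-p})/(p-1)$ are positive, while for $\gamma > 2$ one has $p < 1$ and both are negative), and observing that the two branches of $c_{\max}$ match as $\gamma \to 2$, consistently with the integral depending continuously on $p$.
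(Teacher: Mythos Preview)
Your proof is correct and follows essentially the same route as the paper: bound $\sum_{k=1}^n k^{-p}$ from below by $\int_1^{n+1} x^{-p}\,\mathrm{d} x$ via \eqref{eq:bounds}, identify the integral with $c_{\max}$ in each case, and conclude $w_1^2 \leq \sigma$. The only cosmetic difference is that you first divide through by $c$ to reduce to $c \leq \sum_k k^{-p}$, whereas the paper carries the factor $c$ along; your added remarks on the positivity of $c_{\max}$ and the matching of the two branches at $\gamma=2$ are nice sanity checks not spelled out in the paper.
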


\begin{proof} 
If $\gamma \neq 2$ then $p \neq 1$ and from the leftmost inequality in \eqref{eq:bounds} 
we have
$$
   w_1^2 = c^2 \leq c\frac{1 - (n+1)^{1-p}}{p-1} 
   = c\int_{1}^{n+1} x^{-p} \,\mathrm{d} x
   \leq c\sum_{i=1}^{n} i^{-p} = \sum_{i=1}^{n} w_{i} .
$$
Thus the claim follows from Lemma \ref{cl1}.
Analogously, when $\gamma=2$ we have 
$$ 
   w_1^2 = c^2\leq c\log (n+1)
   = c\int_{1}^{n+1}\frac{1}{x}\,\mathrm{d} x \leq c\sum_{i=1}^n \frac{1}{i}
   = \sum_{i=1}^{n} w_{i} ,
$$
so $\b w\in\An$ and the proof is complete.
\end{proof}

It is worth noting that the inequality $c \leq c_{\max}$
appearing in Theorem \ref{esistenza} is asymptotically tight.
Indeed, suppose that we set $c > (n^{1-p}-p)/(1-p)$ and 
$0 < p < 1$, that is, $\gamma > 2$. Then,
using the rightmost inequality in \eqref{eq:bounds} we obtain
$$
   w_1^2 > c \frac{n^{1-p}-p}{1-p}
   = c \Big( 1 + \int_{1}^{n} x^{-p} \,\mathrm{d} x \Big) \geq
   c\sum_{i=1}^{n} i^{-p} = \sum_{i=1}^{n} w_{i},
$$
hence in this case $\b w\notin\An$. 
When $\gamma = 2$ the analogous conclusion follows by setting $c > \log n + 1$,
and for $1<\gamma<2$ by assuming $c > 1/(p-1)$. 
In summary, we have the following result.

\begin{corollary}   \label{cor:necessity}
Let $p = 1/(\gamma - 1)$ and $w_i = c i^{-p}$ for $i = 1,\ldots,n$.
If $\b w\in\An$ then $c \leq \hat c$ where
$$
   \hat c = \begin{cases} 
   1/(p-1) & 1 < \gamma < 2 \\
   \log n+1 & \gamma = 2 \\
   (n^{1-p} -p)/(1-p) & \gamma > 2 . \end{cases}
$$
\end{corollary}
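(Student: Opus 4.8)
The plan is to read Corollary~\ref{cor:necessity} as the contrapositive of the estimate sketched in the paragraph immediately preceding it, so that the whole argument reduces to bounding the scalar sum $\sum_{i=1}^{n} i^{-p}$ from above. First I would unwind the admissibility condition for the specific vector $w_i = c i^{-p}$: since $w_1 = c$ and $\sum_{k=1}^{n} w_k = c\sum_{i=1}^{n} i^{-p}$, the requirement $w_1^2 \le \sum_{k=1}^{n} w_k$ defining membership in $\An$ is exactly $c^2 \le c\sum_{i=1}^{n} i^{-p}$, i.e.\ (dividing by $c>0$)
$$
   c \;\le\; \sum_{i=1}^{n} i^{-p}.
$$
Hence it suffices to produce, in each of the three regimes, an upper bound $\sum_{i=1}^{n} i^{-p} \le \hat c$; then $\b w\in\An$ immediately forces $c \le \hat c$.

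For that upper bound I would invoke the rightmost inequality in \eqref{eq:bounds} with $i_0 = 0$ and $f(x) = x^{-p}$, which is legitimate because $x\mapsto x^{-p}$ is non-increasing on $[1,\infty)$ for every $p>0$. This gives
$$
   \sum_{i=1}^{n} i^{-p} \;\le\; f(1) + \int_{1}^{n} x^{-p}\,\mathrm{d} x \;=\; 1 + \int_{1}^{n} x^{-p}\,\mathrm{d} x ,
$$
and it only remains to evaluate the integral case by case. When $\gamma \ne 2$ one has $\int_{1}^{n} x^{-p}\,\mathrm{d} x = (n^{1-p}-1)/(1-p)$; for $\gamma > 2$ (so $0<p<1$) absorbing the leading ``$1+{}$'' turns this into $(n^{1-p}-p)/(1-p) = \hat c$, exactly the quantity appearing in the displayed chain of inequalities before the corollary, whereas for $1 < \gamma < 2$ (so $p>1$) the integral stays bounded uniformly in $n$ and reading off the constant yields the claimed $n$-independent $\hat c$. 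When $\gamma = 2$ one instead has $\int_{1}^{n} x^{-1}\,\mathrm{d} x = \log n$, giving $\hat c = 1 + \log n$.

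Combining the two steps closes the argument by contraposition: if instead $c > \hat c$, then $c > \sum_{i=1}^{n} i^{-p}$, whence $w_1^2 = c^2 > c\sum_{i=1}^{n} i^{-p} = \sum_{k=1}^{n} w_k$, contradicting $\b w\in\An$. I do not expect a genuine obstacle here; the only point needing a little care is the transition between regimes — for $\gamma > 2$ the bound genuinely grows with $n$ (in keeping with $\sum_{i=1}^n i^{-p}\to\infty$), for $\gamma = 2$ it grows only logarithmically, and for $1<\gamma<2$ it must be made $n$-free, which is where one should double-check that the convergent integral tail is absorbed cleanly (equivalently, one may simply bound the finite sum by $\zeta(p)$). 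Read alongside Theorem~\ref{esistenza}, this shows that the admissible window $0 < c \le c_{\max}$ is optimal up to the additive/constant terms, i.e.\ asymptotically tight.
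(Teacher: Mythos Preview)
Your approach is exactly the paper's: rewrite admissibility as $c \le \sum_{i=1}^n i^{-p}$, bound the sum above via the rightmost inequality in~\eqref{eq:bounds}, and read the result as the contrapositive. For $\gamma > 2$ and $\gamma = 2$ this matches the paper verbatim and produces the stated $\hat c$.

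The gap is in the case $1 < \gamma < 2$ (i.e.\ $p>1$), where you write that ``reading off the constant yields the claimed $n$-independent $\hat c$.'' It does not: carrying out the same computation gives
\[
   1 + \int_1^n x^{-p}\,\mathrm{d}x \;=\; \frac{p - n^{1-p}}{p-1} \;<\; \frac{p}{p-1},
\]
not $1/(p-1)$. Your fallback of bounding by $\zeta(p)$ fares no better, since e.g.\ $\zeta(2)=\pi^2/6>1$. In fact the stated value $\hat c = 1/(p-1)$ is too small to be a valid necessary bound for every $n$: with $p=2$ and $n=2$ one has $\sum_{i=1}^2 i^{-2}=5/4$, so $c=5/4$ gives an admissible vector $\b w=(5/4,\,5/16)$ (indeed $w_1^2 = 25/16 = w_1+w_2$), yet $5/4 > 1 = 1/(p-1)$. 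The argument you and the paper run yields the correct necessary condition $c\le (p-n^{1-p})/(p-1)$, whose $n$-free relaxation is $p/(p-1)$; the value $1/(p-1)$ appears to be a slip in the paper, matching the asymptotic size of $c_{\max}$ from Theorem~\ref{esistenza} rather than of the sum itself. So your plan is right, but you should not gloss over this step --- when you actually compute the constant you will not recover the printed $\hat c$ in that regime.
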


For simplicity, in the proof of Theorem \ref{esistenza} 
we have only shown that admissible vectors of the form \eqref{powerlawattesa} with $i_0=0$ exist. However, in a similar way, it is possible to show that there are Chung--Lu scale-free networks obtained from admissible vectors of the same form, for every $i_0 > -1$. Indeed, if $\gamma\neq 2$
then it is enough to choose $c$ and $i_0$ such that
\begin{equation}   \label{ammissibile}
   c \leq (1+i_0)^{2p} \frac{(n+i_0+1)^{1-p}-(1+i_0)^{1-p}}{1-p} .
\end{equation}
In fact, choosing $c$ as in \eqref{ammissibile} we obtain 
\begin{equation}   \label{eq:7.5}
    w_1^2 = c^2(1+i_0)^{-2p}
    \leq c\frac{(n+i_0+1)^{1-p}-(1+i_0)^{1-p}}{1-p}
    = \int_{1+i_0}^{n+i_0+1}cx^{-p}\,\mathrm{d}x
    \leq \sum_{i=1}^{n} w_i .
\end{equation}
For $\gamma = 2$ the condition analogous to \eqref{ammissibile} is
$$ 
    c\leq(i_0+1)^2\bigl(\log(n+i_0+1)-\log(i_0+1)\bigr) .
$$ 
The introduction of $i_0$ allows us to prescribe the average expected degree $d$ and the 
largest expected degree $m$ in a Chung--Lu scale-free graph, under appropriate hypotheses. 
The next result, which is based on an idea found in \cite{chung} and \cite[p.\ 109]{chunglubook}, explains how to achieve this goal by a suitable choice of the parameters $i_0$ and $c$ 
in \eqref{powerlawattesa}.

\begin{theorem}   \label{corollario}
Let $\gamma>2$. 
Suppose that $d = d(n)$ and $M = M(n)$ are two nondecreasing functions of $n \in \mathbb{N}$ such that $0<d(n)\leq M(n)\leq n$, 
\begin{equation}    \label{eq:limd/M}
   \lim_{n\to\infty} \frac{d(n)}{M(n)} = 0 ,
\end{equation}
and there exists a constant $0 < \eta < 1$ independent on $n$ such that 
\begin{equation}   \label{eq:condX}
      \eta n d(n) \geq M(n)^2 .
\end{equation}
For each $n \in \mathbb{N}$ and
$i = 1,\ldots n$ let $w_i=c(i_0+i)^{-p}$ where $p=1/(\gamma-1)$,
\begin{equation}   \label{eq:c_i_0}
   c =c(n)= (1-p)d(n) n^p , \qquad    
   i_0 = i_0(n) = 
   n\bigg(\frac{(1-p)d(n)}{M(n)} \bigg)^{1/p} - 1 . 
\end{equation}
Then, \begin{enumerate}
    \item for $n$ sufficiently large it holds $\b w\in\An$
    \item any $G\in G(\b w)$ has an expected degree distribution that follows a power law with exponent $\gamma$, that is, for $k\geq w_n$, the number of nodes with expected degree $k$ is 
    $n_k \approx \alpha k^{-\gamma}$ with $\alpha = (\gamma-1)c^{\gamma-1}$
    \item the largest expected degree of any $G\in G(\b w)$ is $M(n)$ and the average expected degree is asymptotically $d(n)$, in the sense that 
\begin{equation}   \label{eq:mean_d}
   \lim_{n\to\infty} \frac{\mean(\b w)}{d(n)} = 1 .
\end{equation}
\end{enumerate}
\end{theorem}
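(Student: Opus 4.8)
The plan is to prove the three claims in the order (3), (1), (2): the admissibility assertion in (1) rests on an asymptotic lower bound for $\sum_i w_i$ that emerges while proving (3), and (2) is then an immediate application of Lemma~\ref{cl1}. As a preliminary step I would abbreviate $t_n := \big((1-p)d(n)/M(n)\big)^{1/p}$, so that \eqref{eq:c_i_0} reads $i_0+1 = n\,t_n$; then $t_n>0$ (hence $i_0>-1$), and $t_n\to 0$ by \eqref{eq:limd/M}. Since $\gamma>2$ gives $p=1/(\gamma-1)\in(0,1)$, the function $x\mapsto x^{-p}$ is positive and strictly decreasing on $(0,\infty)$, whence $w_1>w_2>\dots>w_n>0$. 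I would also note that $M(n)\to\infty$: $d$ and $M$ are positive and nondecreasing, so a bounded $M$ would force a bounded $d$, with both converging to strictly positive limits, contradicting \eqref{eq:limd/M}.

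For (3) I would first read off the largest expected degree directly from \eqref{eq:c_i_0}: $w_1 = c(i_0+1)^{-p} = (1-p)d(n)\,n^p\,(n\,t_n)^{-p} = M(n)$, using $t_n^{-p} = M(n)/\big((1-p)d(n)\big)$. For the average, apply the bounds \eqref{eq:bounds} to $f(x)=x^{-p}$ and substitute $c=(1-p)d(n)n^p$, $n+i_0+1=n(1+t_n)$ and $i_0+1=n\,t_n$: the lower bound gives $\sum_{i=1}^n w_i \ge \tfrac{c}{1-p}\big((n+i_0+1)^{1-p}-(i_0+1)^{1-p}\big) = n\,d(n)\big((1+t_n)^{1-p}-t_n^{1-p}\big)$, and, discarding a nonnegative term, the upper bound gives $\sum_{i=1}^n w_i \le w_1 + \tfrac{c}{1-p}(n+i_0+1)^{1-p} = M(n) + n\,d(n)(1+t_n)^{1-p}$. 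Dividing by $n\,d(n)$: the lower estimate for $\mean(\b w)/d(n)$ is $(1+t_n)^{1-p}-t_n^{1-p}\to 1$, while the upper estimate is $M(n)/(n\,d(n)) + (1+t_n)^{1-p}$, where \eqref{eq:condX} yields $M(n)/(n\,d(n))\le\eta/M(n)\to 0$; so both estimates tend to $1$, which is \eqref{eq:mean_d}.

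For (1), fix $\varepsilon$ with $0<\varepsilon<1-\eta$. Because $(1+t_n)^{1-p}-t_n^{1-p}\to 1$, the lower bound above gives $\sum_i w_i \ge (1-\varepsilon)\,n\,d(n)$ for all large $n$, while \eqref{eq:condX} gives $w_1^2 = M(n)^2 \le \eta\,n\,d(n) < (1-\varepsilon)\,n\,d(n)$. Hence $w_1^2 \le \sum_i w_i$, and together with $w_1\ge\dots\ge w_n\ge 0$ this yields $\b w\in\An$ for $n$ large. Claim (2) then follows immediately: for such $n$, $\b w$ is an admissible vector of the form \eqref{powerlawattesa} with $i_0>-1$ and $c=c(n)>0$, so Lemma~\ref{cl1} applies verbatim (with the $n$-dependent constant $c(n)$) and gives a power law with exponent $\gamma$ and $\alpha=(\gamma-1)c^{\gamma-1}$.

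I expect the only real obstacle to be the bookkeeping in (3): one must verify that the two stray contributions $t_n^{1-p}$ and $M(n)/(n\,d(n))$ genuinely vanish, and this is exactly where the hypotheses are consumed — \eqref{eq:limd/M} to drive $t_n\to 0$ and \eqref{eq:condX} to control $M(n)$ against $n\,d(n)$ (and, via the monotonicity argument, to force $M(n)\to\infty$). Everything else is substitution into \eqref{eq:bounds} and the definitions of $c$ and $i_0$.
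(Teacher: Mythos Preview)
Your proposal is correct and follows essentially the same route as the paper's own proof: both set $i_0+1=n\,t_n$ (the paper calls this $n\theta$), compute $w_1=M(n)$ directly, control $\sum_i w_i$ via the integral bounds \eqref{eq:bounds}, and use \eqref{eq:condX} together with $t_n\to 0$ to obtain both \eqref{eq:mean_d} and admissibility. The only cosmetic difference is that the paper writes an exact identity $\sum_i w_i = nd\big((1+\theta)^{1-p}-\theta^{1-p}\big)+c\epsilon_n$ with a remainder $0<c\epsilon_n<M$, whereas you use a two-sided squeeze; and you bound $M/(nd)\le \eta/M$ and argue $M\to\infty$, while the paper bounds it by $\mathcal{O}(1/\sqrt{nd})$ --- either way the term vanishes. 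Your explicit appeal to Lemma~\ref{cl1} for claim (2) is exactly what the paper intends but leaves implicit.
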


\begin{proof}
To keep the notation simple, we sometimes omit the explicit dependence on $n$ of $M$, $d$, and other variables to be introduced in the proof.
Firstly, note that \eqref{eq:c_i_0} implies that the largest expected degree is
\begin{equation*}
   w_1 = c(i_{0}+1)^{-p} = (1-p)d n^p
   \bigg(n\bigg(\frac{(1-p)d}{M}\bigg)^{1/p}\bigg)^{-p} = M .
\end{equation*}
By the 
rightmost formula in \eqref{eq:c_i_0}
we can write $i_0 + 1 = n\theta$ where $\theta=\theta(n)$ is a function of $n$ such that $\theta(n) > 0$ and $\lim_{n\to+\infty}\theta(n)=0$.
Define
$$
    \epsilon_n = \sum_{i=1}^n (i_0+i)^{-p} -
    \int_{i_0+1}^{n+i_0+1}x^{-p} \,\mathrm{d}x .
$$
From \eqref{eq:bounds} we have $\epsilon_n > 0$ and
$$
    \epsilon_n \leq 
    (i_0+1)^{-p} - \int_{n+i_0}^{n+i_0+1}x^{-p} \,\mathrm{d}x 
    < (i_0+1)^{-p} .
$$ 
In particular, $c\epsilon_n < M$. Hence, from \eqref{eq:condX}, we have
$c\epsilon_n/n d = \mathcal{O}(1/\sqrt{n d})$.
Moreover,
\begin{align*}
   \sum_{i=1}^n w_i =
   c\sum_{i=1}^n (i+i_0)^{-p} 
   & = c \bigg( \int_{i_0+1}^{n+i_0+1}x^{-p} \,\mathrm{d} x + \epsilon_n  \bigg)  \\
   & = \frac{c}{1-p}\Bigl((n+i_0+1)^{1-p}-(i_0+1)^{1-p} \Bigr) + c \epsilon_n \\
   & = dn \bigl((1+\theta)^{1-p}-\theta^{1-p}\bigr) + c \epsilon_n . 
\end{align*}
Hence,
\begin{equation}   \label{eq:averagedegree}
   \lim_{n\to\infty} \frac{1}{n d(n)}\sum_{i=1}^n w_i = 
   \lim_{n\to\infty} \bigl((1+\theta)^{1-p}-\theta^{1-p}\bigr) + 
   \frac{c\epsilon_n}{n d} = 1
\end{equation}
and we get \eqref{eq:mean_d}.
Finally, if \eqref{eq:condX} holds then
for sufficiently large $n$ we have 
$$
   \sum_{i=1}^n w_i \geq nd(1 -\theta^{1-p}) \geq
   nd\eta \geq M^2 = w_1^2 .
$$
Hence $\b w\in\An$ and the proof is complete.
\end{proof}

It is worth noting that the hypothesis 
\eqref{eq:limd/M}
is by no means restrictive. 
Indeed, 
in all scale-free networks the degree of hub nodes is far greater than the average degree,
and their ratios behaves as indicated in \eqref{eq:limd/M} when the network size diverges.
Moreover, the condition \eqref{eq:condX} is almost optimal. 
In fact,
the condition $\b w\in\An$ is equivalent to $M^2 = w_1^2 \leq \sum_i w_i = n\mean(\b w)$
where $\mean(\b w)= (\sum_i w_i)/n$ is the expected average degree of a random graph from $G(\b w)$.
Hence, if \eqref{eq:condX} is violated then no arbitrarily large admissible vector $\b w$ can be obtained. 
This conclusion agrees with the findings in \cite{cutoffs},
which show that the largest node degree in a growing network with no degree correlation and average degree $d_{\mathrm{avg}}$ must be $\mathcal{O}(\sqrt{nd_{\mathrm{avg}}})$,
independently on the degree distribution.

On the other hand, the condition \eqref{eq:condX} is quite stringent,
at least in some scenarios.
For example, if $d(n)$ is upper bounded by a constant then that condition implies that 
the largest expected degree
$M(n)$ must grow not faster than $\sqrt{n}$.
However, we observed in \eqref{eq:kmax} that 
if the degree distribution follows a power law then
the largest degree  behaves as $d_{\max} = \mathcal{O}(n^p)$. Hence,
the estimate \eqref{eq:kmax} can be attained 
for large $n$ only if $p \leq \frac12$, that is $\gamma \geq 3$.
Anyway, the hypotheses of Theorem \ref{corollario} are essentially the most general possible for having admissible expected degree sequences following a power law.

\subsection{Conditions for the giant component}

A very relevant element in the analysis of random graphs
is the presence of giant components. It is customary to say that a network has a giant (connected) component if there is a connected subgraph comprising a significant fraction of all the nodes. More precisely, in a network whose number
of nodes $n$ increases over time, a giant component is a maximal connected subgraph
whose size is $\mathcal{O}(n)$. Aiello, Chung and Lu \cite{ACL01} obtained very detailed results concerning existence, uniqueness and the size not only of the giant component, but also of the smaller connected subgraphs of random power-law graphs. In particular, they showed that a random power law graph with exponent $\gamma < \gamma_0 \approx 3.47875$ almost surely has a unique giant component.\footnote{
In probability theory, it is customary to say that a property depending on an integer $n$ holds almost surely 
if the probability that it holds tends to $1$ as $n$ goes to infinity.} 
However, 
the results in \cite{ACL01} are based on the fundamental assumption that
the largest degree in a random power-law graph with $n$ nodes and exponent $\gamma>1$
is roughly proportional to $n^{1/\gamma}$. That assumption is questionable. For example,
in \cite[\S4.3]{BA2} 
and \cite[\S 3.3.2]{NeSIREV03} 
it is argued that the largest degree behaves as $n^{1/(\gamma - 1)}$,
as we also discussed in \eqref{eq:kmax}.

More generally, for Chung--Lu random graphs with a generic expected degree sequence, a giant component appears if the expected average degree is larger than $1$. Indeed, the following result holds, see 
\cite{CL06} and \cite[Thm.\ 6.14]{chunglubook}.\footnote{
In \cite{CL06} and \cite[Thm.\ 6.14]{chunglubook}, the hypothesis $\b w\in\An$ is tacitly assumed.}

\begin{theorem}   \label{thm:gcc}
Let $\b w\in\An$, $\mean(\b w)= (\sum_i w_i)/n$ and $\mean_2(\b w) = (\sum_i w_i^2)/(\sum_i w_i)$. If $\mean(\b w)>1$ then almost surely a random graph $G\in G(\b w)$ has a unique giant component, 
whose overall weight is approximately 
$(\lambda_0 + o(1)) \sum_i w_i$,
where $\lambda_0$ is the unique positive root of the equation
$$
   \sum_i w_i \mathrm{e}^{-w_i \lambda} = (1-\lambda) \sum_i w_i .
$$
Moreover, if $\mean_2(\b w) < 1-\varepsilon$ for some $\varepsilon>0$ independent on $n$ then almost surely there is no giant component in $G$.
\end{theorem}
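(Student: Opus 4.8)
The plan is to realise the connected component of a fixed vertex $v$ by a breadth-first exploration and to compare it with a branching process. When a vertex $u$ of weight $w_u$ is exposed, each still-unseen vertex $j$ is attached to $u$ independently with probability $w_uw_j/\sigma$, so the \emph{weight} carried by the newly discovered vertices has conditional mean $w_u\sum_j w_j^2/\sigma = w_u\,\mean_2(\b w)$, up to a correction for already-used vertices. Hence the weights on the exploration front evolve, to leading order, like a Galton--Watson process with mean offspring weight $\mean_2(\b w)$; the two alternatives in the theorem are precisely the supercritical and subcritical regimes of this process.

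Before treating them I would isolate the purely analytic statement about $\lambda_0$. Set $g(\lambda) = \sum_i w_i e^{-w_i\lambda} - (1-\lambda)\sigma$. Then $g(0)=0$ and $g''(\lambda) = \sum_i w_i^3 e^{-w_i\lambda} > 0$, so $g$ is strictly convex, while $g'(0) = \sigma - \sum_i w_i^2 = \sigma\big(1 - \mean_2(\b w)\big)$. Since $\mean_2(\b w)\geq\mean(\b w) > 1$ by Cauchy--Schwarz, $g'(0) < 0$; together with $g(1) = \sum_i w_i e^{-w_i} > 0$ and strict convexity this forces exactly one positive root $\lambda_0$, lying in $(0,1)$, which is the asserted characterisation.

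For the subcritical claim, suppose $\mean_2(\b w) < 1-\varepsilon$. The exploration started at $v$ is stochastically dominated, in the weight variable, by a Galton--Watson process started from weight $w_v$ with mean offspring weight $\mean_2(\b w)\leq 1-\varepsilon$; standard large-deviation bounds for subcritical branching processes then give $\mathbb{P}\big(\mathrm{vol}(\mathcal{C}(v))\geq t\big)\leq e^{-ct}$ for a constant $c=c(\varepsilon)>0$ uniform in $v$, once the individual edge probabilities $w_iw_j/\sigma$ are uniformly small --- which holds because $\b w\in\An$ gives $w_1^2/\sigma\leq 1$. A union bound over the $n$ vertices then shows that almost surely every component has volume $O(\log n)$; since a vertex of weight $w_j$ is attached to any given vertex with probability at most $w_jw_1/\sigma$, low-weight vertices are overwhelmingly isolated, and hence no component can contain a linear number of nodes.

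For the supercritical claim, $\mean(\b w) > 1$. Comparing the exploration with Galton--Watson processes whose rates are the true ones perturbed by $\pm o(1)$, one shows that with probability bounded away from zero the exploration from $v$ ``survives'', reaching $\Omega(n)$ vertices, while otherwise it dies out within $O(\log n)$ steps. Let $S$ be the total weight of the vertices whose exploration survives. Heuristically $v$ misses this set only if all its potential edges into it are absent, which has probability $\approx\prod_j(1-w_vw_j/\sigma)\approx e^{-w_vS/\sigma}$; summing weights gives $\sigma - S\approx\sum_i w_i e^{-w_iS/\sigma}$, i.e. $S/\sigma$ is a positive root of $g$, so $S=(\lambda_0+o(1))\sigma$. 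To upgrade this to an almost-sure statement I would apply an edge-exposure (Azuma--Hoeffding) martingale to show that the weight of each large component concentrates around its mean, and a \emph{sprinkling} argument --- two vertex sets of weight $\Omega(\sigma)$ lying in distinct components fail to be joined by an edge only with probability $e^{-\Omega(\sigma)}$ --- to conclude that the giant component is unique. The main obstacle I anticipate is exactly this final rigorous step: the branching approximation degrades as vertices are consumed over the $\Theta(n)$-step exploration, and hubs must be kept under control, for which the admissibility bound $w_1^2\leq\sigma$ is essential. This is where the detailed analysis of \cite{CL06} and \cite[Thm.\ 6.14]{chunglubook} does the real work, and I would invoke it rather than reprove it.
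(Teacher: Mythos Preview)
The paper does not actually prove this theorem: it is quoted verbatim as a known result from \cite{CL06} and \cite[Thm.\ 6.14]{chunglubook}, with no proof given. So there is nothing to compare your argument against in the paper itself.

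That said, your sketch is broadly faithful to the approach taken in those references --- breadth-first exploration coupled to a branching process, subcritical and supercritical dichotomy governed by $\mean_2(\b w)$, the fixed-point equation for $\lambda_0$ arising from the survival probability, and sprinkling for uniqueness. Your derivation of the uniqueness of $\lambda_0$ via convexity of $g$ is clean and self-contained, and your final paragraph is honest about where the hard analysis lies. Since you end by invoking exactly the same references the paper does, your proposal is in effect a more informative version of the paper's own treatment: the paper cites; you cite and explain why the cited argument works. For the purposes of this paper, a one-line citation would have sufficed, but nothing in your sketch is wrong.
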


In the preceding claim, the overall weight of a node subset $S$ of a graph $G\in G(\b w)$ is the number $\sum_{i\in S}w_i$.
Note that
$\mean(\b w) \leq \mean_2(\b w)$ due to the Cauchy--Schwartz inequality.
Whilst nothing is known about the existence of a giant component when
$\mean(\b w) < 1 < \mean_2(\b w)$,
it is important to understand if it is possible to generate vectors $\b w \in \An$ 
for arbitrarily large $n$ and $\mean(\b w) > 1$.
For example, the definition 
considered in Theorem \ref{esistenza}, i.e.\ 
$w_i = ci^{-p}$,
severely constrains
the possibility of having a giant component, as shown in the following result.

\begin{corollary}    
Let $0 < p \leq 1$
and let the vector $\b w\in\An$ be defined as $w_{i} = ci^{-p}$ for $i=1,\ldots, n$. 
If $\mean(\b w) > 1$ then either $p \leq \frac{1}{2}$ or 
$\frac12 < p <1$ and 
$$ 
   n < 1/(1-p)^{\frac{2}{2p-1}} .
$$ 
\end{corollary}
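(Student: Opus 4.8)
The plan is to extract from the two hypotheses a lower bound on the normalising sum and then contradict it, for $p$ close to $1$, using the integral comparison \eqref{eq:bounds}. Put $S_n := \sum_{i=1}^n i^{-p}$, so that $\sum_i w_i = c\,S_n$ and $\mean(\b w) = c\,S_n/n$. First I would use admissibility: $\b w\in\An$ forces $w_1^2 = c^2 \le \sum_i w_i = c\,S_n$, hence $c \le S_n$. Next I would use the giant-component hypothesis in the form $\mean(\b w)>1$, i.e.\ $c\,S_n > n$. Multiplying $c\le S_n$ by $S_n$ and combining, $S_n^2 \ge c\,S_n > n$, so $S_n > \sqrt n$. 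This is the only place both hypotheses are used; everything afterwards is about how large $S_n$ can actually be.

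If $p\le\frac12$ the first alternative in the statement already holds and there is nothing more to do, so I would then assume $\frac12 < p < 1$. Applying the rightmost inequality of \eqref{eq:bounds} with $i_0=0$ and the non-increasing function $f(x)=x^{-p}$ gives
$$
   S_n \;\le\; 1 + \int_1^n x^{-p}\,\mathrm{d}x \;=\; \frac{n^{1-p}-p}{1-p} \;<\; \frac{n^{1-p}}{1-p},
$$
the last inequality because $0<p<1$. Chaining this with $S_n>\sqrt n$ yields $(1-p)\sqrt n < n^{1-p}$, i.e.\ $(1-p) < n^{1/2-p}$.

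Finally I would do the sign bookkeeping: since $\frac12<p<1$ the exponent $\frac12-p$ is negative, so $(1-p)<n^{1/2-p}$ is the same as $n^{\,p-1/2} < 1/(1-p)$; raising both sides to the positive power $\tfrac{1}{p-1/2}=\tfrac{2}{2p-1}$ preserves the inequality and produces $n < (1-p)^{-2/(2p-1)}$, which is the claimed bound. I do not expect a genuine obstacle here: the argument is three short steps, and the only thing to be careful about is that both the passage to reciprocals and the exponentiation in the last step interact correctly with the sign of $p-\tfrac12$ (which is exactly why the dichotomy is phrased with the threshold $p=\tfrac12$). One may also note in passing that the intermediate bound $S_n<(n^{1-p}-p)/(1-p)$ is precisely the sharp constant $\hat c$ of Corollary \ref{cor:necessity}, so the chain $c\le S_n<\hat c$ is already implicit in the earlier results.
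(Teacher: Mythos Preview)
Your argument for $0<p<1$ is correct and follows essentially the same route as the paper: combine admissibility with $\mean(\b w)>1$, bound $S_n:=\sum_{i}i^{-p}$ above by the integral $n^{1-p}/(1-p)$, and solve for $n$. Your passage through the single inequality $S_n^2\ge cS_n>n$ is in fact a little cleaner than the paper's version, which first quotes Corollary~\ref{cor:necessity} to bound $c$ and then bounds $S_n$ separately before multiplying; the algebra from $(1-p)\sqrt n<n^{1-p}$ onward is identical to the paper's step \eqref{limiten}.

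There is, however, one omission. The hypothesis allows $p=1$, but after disposing of $p\le\tfrac12$ you write ``so I would then assume $\tfrac12<p<1$'', silently dropping the boundary value. For $p=1$ neither alternative in the stated conclusion is available, so the only way the corollary can hold is for the hypothesis $\mean(\b w)>1$ to be impossible; this must be argued separately. The paper does exactly that: it invokes Corollary~\ref{cor:necessity} to get $c\le\log n+1$, uses $S_n\le 1+\log n$, and concludes $\mean(\b w)\le(\log n+1)^2/n<1$. Your own framework would close the gap the same way --- from $S_n>\sqrt n$ and the harmonic-number bound $S_n\le 1+\log n$ one reaches a contradiction --- but some sentence to this effect is needed.
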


\begin{proof} 
Firstly, we exclude the case $p = 1$. Indeed, in that case Corollary \ref{cor:necessity} implies
$c \leq \log n + 1$, and for $n > 1$ it holds
$$
   \mean(\b w) = \frac{c}{n} \sum_{i=1}^n i^{-1} 
   \leq \frac{(\log n + 1)^2}{n}  < 1 .
$$
For $0 < p < 1$, again by Corollary \ref{cor:necessity} we have 
$c < n^{1-p}/(1-p)$. Hence, 
$$
	\mean(\b w) = \frac{c}{n} \sum_{i=1}^n i^{-p} 
	< \frac{n^{-p}}{1-p} \bigg( 1 + \int_1^n x^{-p}\, \mathrm{d} x\bigg)
	< \frac{n^{1-2p}}{(1-p)^2} .
$$
Letting $\mean(\b w) > 1$ we obtain
\begin{equation}   \label{limiten}
	n^{1-2p} > (1-p)^2 .
\end{equation}
This inequality holds for every $n\in\mathbb{N}$ if and only if $p \leq \frac{1}{2}$. On the other hand,
if $p > \frac12$ then 
\eqref{limiten} yields the upper bound $n < (1-p)^{2/(1-2p)}$ and the theorem is proved.
\end{proof}

As shown in the preceding corollary, the construction of large scale-free random graphs having a giant component using the formula $w_i = c i^{-1/(\gamma-1)}$ is not always possible if $\gamma < 3$.
On the other hand, for $\gamma \geq 3$, 
we can obtain vectors $\b w\in\An$ with arbitrarily large $n$ and  such that the graphs in $G(\b w)$ have a prescribed expected average degree $d > 1$.
Indeed, let $p=1/(\gamma-1)$ and $c = d(1-p)n^p$. If $0<p< 1/2$, then for sufficiently large $n$ it holds
$$
   c = d(1-p)n^p \leq \frac{(n+1)^{1-p}-1}{1-p} ,
$$
so we have $\b w\in\An$ by Theorem \ref{esistenza}.
Moreover, using essentially the same argument as the one of the proof of Theorem \ref{corollario}, it is not difficult to observe that the expected average degree of a graph in $G(\b w)$ is asymptotically equal to $d$.
The same conclusions carry over the case $p = 1/2$, that is $\gamma = 3$, under the additional constraint $d\leq 4$.
It is worth noting that, with this construction, we have
$$
   \frac{w_1}{w_n} = (n+1)^p \, ,
$$
that is, the ratio between the expected largest and smallest degrees 
behaves as the analytical estimate obtained in \eqref{eq:kmax}.
Taking everything into consideration,
the next result provides a way to construct sequences of large scale-free Chung--Lu  random graphs with power law degree distribution $n_k=\alpha k^{-\gamma}$ and expected average degree larger than one, for every exponent $\gamma > 2$.
\begin{corollary}   \label{cor:M(n)}
Let $0 < p < 1$ and let $d$ be a fixed number larger than one.
For $n\in\mathbb{N}$ define 
\begin{equation}   \label{eq:M(n)}
   M(n) = \sqrt{dn/2} ,
\end{equation}
Moreover, for $i = 1,\ldots,n$
let $w_i = c(i_0 + i)^{-p}$ where 
$c$ and $i_0$ are defined as in \eqref{eq:c_i_0}.
Then for sufficiently large $n$ one has $\b w\in\An$ and
the expected average degree of a random graph from $G(\b w)$ is asymptotically $d$. 
\end{corollary}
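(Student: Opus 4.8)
The plan is to derive the statement directly from Theorem~\ref{corollario}, specialised to the constant function $d(n)\equiv d$ and to $M(n)=\sqrt{dn/2}$. Once it is checked that these two functions satisfy all the hypotheses of that theorem for $n$ large enough, conclusions (1) and (3) of Theorem~\ref{corollario} give exactly what is claimed, namely $\b w\in\An$ for sufficiently large $n$ and $\mean(\b w)/d\to 1$; as a bonus, conclusion (2) yields the power law $n_k\approx\alpha k^{-\gamma}$ with $\alpha=(\gamma-1)c^{\gamma-1}$ announced in the discussion preceding the corollary.

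Thus the only work is the verification of the hypotheses, which I would carry out as follows. Since $0<p<1$ we have $\gamma=1+1/p>2$, as required. The function $d(n)\equiv d$ is positive and nondecreasing, and $M(n)=\sqrt{dn/2}$ is strictly increasing in $n$; moreover, for $n\geq 2d$ we have $d^2\leq dn/2$ and $dn/2\leq n^2$, hence $0<d\leq M(n)\leq n$, so Theorem~\ref{corollario} may be applied on the tail $\{n:n\geq 2d\}$, which is all that is needed since its conclusions are asymptotic. Next,
$$
   \frac{d(n)}{M(n)}=\frac{d}{\sqrt{dn/2}}=\sqrt{\frac{2d}{n}}\longrightarrow 0
   \qquad (n\to\infty),
$$
which is \eqref{eq:limd/M}. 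Finally, $M(n)^2=dn/2$, so the inequality $\eta\,n\,d(n)\geq M(n)^2$ of \eqref{eq:condX} holds with the choice $\eta=\tfrac12$, a constant strictly between $0$ and $1$ and independent of $n$. With all the hypotheses of Theorem~\ref{corollario} in force, parts (1) and (3) of that theorem complete the proof.

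There is no real obstacle here, but one point must not be overlooked: the constant $\eta$ in \eqref{eq:condX} is required to be \emph{strictly} less than $1$, and this is precisely the reason for the factor $1/2$ in the definition \eqref{eq:M(n)} of $M(n)$. Had one taken $M(n)=\sqrt{dn}$, then $M(n)^2=nd$ and \eqref{eq:condX} would be violated for every admissible $\eta<1$, so Theorem~\ref{corollario} could not be invoked; any fixed factor in $(0,1)$ could replace $1/2$ with no change to the argument. It is also worth remarking that, since $d>1$, the limit $\mean(\b w)\to d$ forces $\mean(\b w)>1$ for all large $n$, so Theorem~\ref{thm:gcc} then guarantees that a random graph $G\in G(\b w)$ almost surely has a unique giant component, which is the point of this subsection.
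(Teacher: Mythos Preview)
Your proof is correct and follows exactly the approach of the paper, which states only that the claim is a straightforward consequence of Theorem~\ref{corollario}; you have simply spelled out the verification of the hypotheses that the paper leaves implicit. Your remark that the factor $1/2$ in \eqref{eq:M(n)} is precisely what allows the choice $\eta=\tfrac12<1$ in \eqref{eq:condX} is a nice observation not made explicit in the paper.
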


\begin{proof}
The claim is a straightforward consequence of Theorem \ref{corollario}.
\end{proof}

In our computational examples, we always adopt the formula \eqref{eq:M(n)}, and all resulting vectors $\b w$ are admissible.
We conclude with the following example showing that, under certain restrictions, the proposed construction can 
produce random scale-free networks similar to those produced by the Barab\'asi--Albert model. 
Let $\b w\in\mathbb{R}^n$ be the vector with entries $w_i = ci^{-1/2}$ for $i = 1,\ldots,n$ where $0 < c \leq 2\sqrt{n+1}-2$.
By Theorem \ref{esistenza}, $\b w\in\An$ and a graph $G\in G(\b w)$ has an expected degree distribution following a power law with exponent $\gamma = 3$ and expected average degree $d = 2c/\sqrt{n}$. In particular, 
the ratio between the largest and the average degree is approximately $\sqrt{n}/2$, as in the Barab\'asi--Albert model.

\section{Computational examples}  \label{sec:experiments}

In what follows, we present a possible implementation 
of a random graph generator based on the Chung--Lu model.
Then, 
we show a number of examples to illustrate the analytical results in the previous sections and to demonstrate the flexibility 
of the proposed generator.

\subsection{An efficient generator of Chung--Lu random graphs}

Since real-world networks are often very large, the availability of efficient random graph generators is crucial for practical purposes.
The obvious algorithm based on \eqref{eq:defpij} considers each node pair and generates the corresponding edge according to the prescribed probability. 
This is the approach implemented in the function {\tt sticky}
of the \matlab package CONTEST \cite{contest}, which is probably the earliest implementation of a $G(\b w)$-type random graph generator.
The resulting computational cost is $\mathcal{O}(n^2)$ for a graph with $n$ nodes, which is unsuitable for large graphs.
An efficient implementation of the Chung--Lu model is included within 
the block two-level Erd\H{o}s--R\'enyi (BTER) algorithm \cite{BTER}.
That algorithm
has been designed with the goal of producing random graphs resembling certain social network properties,
which are assembled as the union of Erd\H{o}s--R\'enyi random graphs.
A more efficient algorithm, 
based on a rejection sampling technique, has been described in \cite{MiHa11}.
The average computational cost of that algorithm is essentially linear in the number of nodes and edges.

We propose here a new generator for graphs in $G(\b w)$. The main advantage of our generator is its 
compact, highly modular implementation, which runs very efficiently on \matlab and other high-level scientific programming languages,
due to the absence of explicit for-loops. 
Our algorithm is based on some ideas laid out in
\cite{BTER,Nuovissimo} and, as for the method proposed in \cite{MiHa11}, has a running time essentially proportional to the number of edges in the graph,
over a very large size range.

The algorithm implements the principle called ``ball dropping'' in \cite{Nuovissimo}.
According to such principle, the algorithm initially generates two random vectors, {\tt I} and {\tt J},
whose entries are node indices. In each vector, the node index $i\in\{1,\ldots,n\}$ appears $\ell w_i/\sum_jw_j$ times on average, 
where $\ell$ is the length of the random vectors.
Then, edges are generated by joining nodes {\tt I}$(i)$ and {\tt J}$(i)$ for $i = 1,\ldots,\ell$.  
An obvious choice would be to choose $\ell = \lceil m\rceil$ where $m = \frac12\sum_{i=1}^n w_i$ is the expected number of edges in a random graph from $G(\b w)$.
However,
the procedure may produce repeated edges, which can be removed at the 
end of the 
algorithm. To counteract the removal, the vector length 
is set to $\lceil m + e\rceil$ where 
$e = \frac12\mean_2(\b w)^2$
is an estimate on the number of multiple edges that are generated initially (assuming $\b w\in\An$, of course).
In this way, the number of generated edges is not constant, and its average value is 
about $m$.
The pseudo-code of the algorithm is shown in Algorithm 1.
The random vectors {\tt I} and {\tt J}
are computed by means of the auxiliary function {\tt sample}.

The computational complexity of the algorithm is dominated by that of the sorting of the random vector $r$ in {\tt sample}, whose length is 
essentially equal to
the number of edges in the network. 
All other tasks are linear in the number of nodes or edges. Thus, the computational complexity of Algorithm 1 for a network with $m$ edges is $\mathcal{O}(m \log m)$. However, the effective wall-clock time is almost linear in $m$ for sparse graphs and over a very large size range, as we show in the sequel. In fact, the algorithm can be implemented making extensive use of highly vectorizable low-level functions.

\begin{algorithm}[t]
\DontPrintSemicolon
\caption{Chung--Lu random graph generator}   \label{alg:1it}
\KwIn{$\b w = (w_1,\ldots,w_n)$, expected degree vector}
\KwOut{random graph $G(V,E)$ with $V = \{1,\ldots,n\}$ }
\SetKwFunction{sample}{sample}
\BlankLine
$wsum_0 \leftarrow 0$\;
\For{$i = 1$ \KwTo $n$}{
$wsum_{i} \leftarrow wsum_{i-1}+w_i$ \;
}
$wsum \leftarrow wsum / wsum_n$\;
$\ell \leftarrow \lceil \frac12 \sum_{i=1}^n w_i + \frac12 \mean_2(\b w)^2 \rceil$ \;
$I \leftarrow \sample{$\ell,wsum$}$ \;
$J \leftarrow \sample{$\ell,wsum$}$ \;
$E \leftarrow \emptyset$ \;
\For{$i = 1$ \KwTo $\ell$}
   {
   $E \leftarrow E \cup \{I(i),J(i)\}$ \;  
   }
\;
\SetKwProg{Fn}{Function}{}{}
\Fn{\sample{$\ell, s$}}
{
\KwIn{integer $\ell$, vector $s = (s_0,\ldots,s_n)$}
\KwOut{vector $bin = (bin_1,\ldots,bin_\ell)$}
\BlankLine
choose $r = (r_1,\ldots,r_\ell) \in (0,1)^\ell$ uniformly at random\;
sort $r$ in increasing order\;
$ k \leftarrow 0$ \;
\For{$i = 1$ \KwTo $\ell$}{
   \While{$r_i \geq s_k$}{
      $k \leftarrow k+1$ \;
   }
   $bin_i \leftarrow k$ \;
}        \KwRet $bin$\;
}
\end{algorithm}

Figure \ref{fig:code} shows the \matlab 
implementation of this algorithm, which is also available through the GitHub repository {\tt https://github.com/ftudisco/scalefreechunglu}
together with a {\sc Python} version. 
The auxiliary function {\tt sample} in Algorithm 1 is implemented by a 
one-line call to the \matlab builtin function {\tt discretize}.

\begin{figure}[t]
\centering
\lstset{language=matlab}
\begin{lstlisting}
function A = CL_generator(w)
% INPUT:    w = expected degrees vector
% OUTPUT:   A = binary adjacency matrix of a graph G in G(w)
    n = length(w);
    m = (dot(w,w)/sum(w))^2 + sum(w); 
    m = ceil(m/2);
    wsum = [0 ; cumsum(w(:))]; 
    wsum = wsum/wsum(end);
    I = discretize(rand(m,1),wsum);
    J = discretize(rand(m,1),wsum);
    A = sparse([I;J],[J;I],1,n,n);
    A = spones(A);
end
\end{lstlisting}
\caption{\matlab function that, given the vector of expected degrees $\b w=(w_1,\dots,w_n)$, generates the adjacency matrix of a random graph $G\in G(\b w)$.} \label{fig:code}
\end{figure}

\begin{figure}[t]
    \centering
    \includegraphics[width=.47\textwidth]{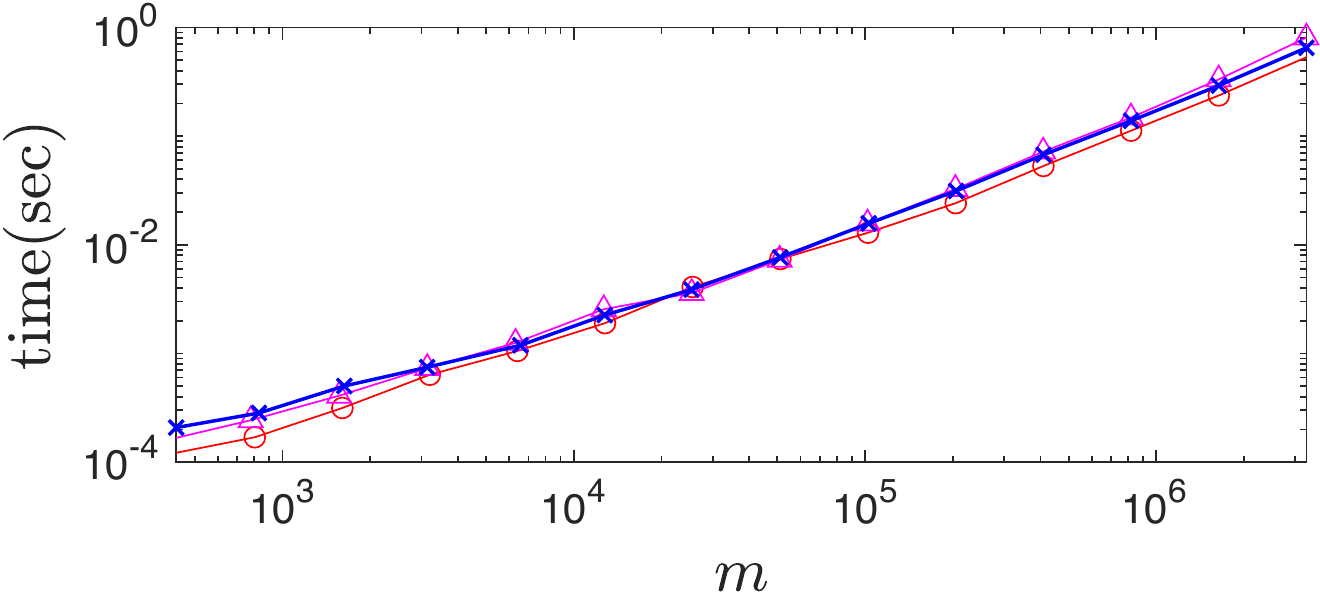}
    \hspace{5mm}
    \includegraphics[width=.47\textwidth]{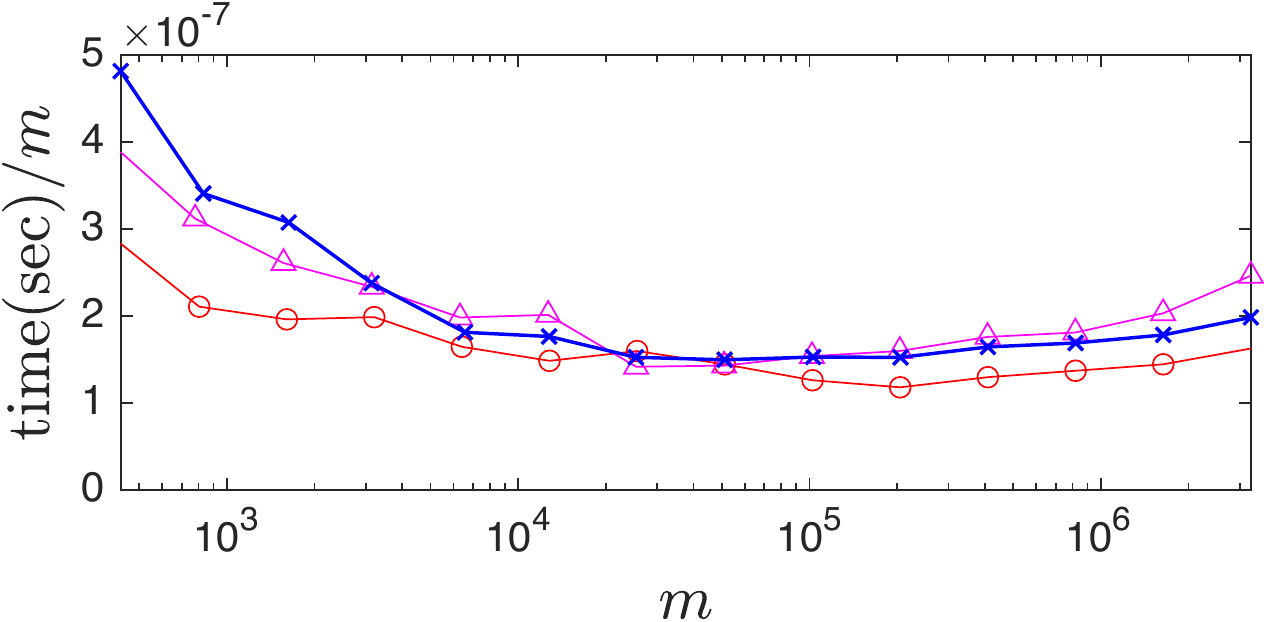}
    \caption{Mean performance of the random graph generator over 10 random trials and three different degree profiles. Each line represents a sequence of random graphs generated by a different 
expected 
degree sequence: constant $w_i = 4$ (red circles), uniform in $[0,8]$ (blue crosses), power law with $\gamma = 3$ and expected average degree $\mean(\b w) = 4$ (magenta triangles). The horizontal axes indicate the number $m$ of actual edges in graphs. Left: Running time vs number of edges. Right: Average time per edge.}
    \label{fig:timings}
\end{figure}

Figure \ref{fig:timings} demonstrates the running time of the algorithm for various random graphs in the $G(\b w)$ model. We performed a series of experiments in \matlab v.2019b on a laptop PC endowed with a i7-8550U processor and 1.80GHz CPU clock. 
We generated graphs with $n$ nodes where $n = 100 \cdot 2^{k-1}$ for $k = 1,\ldots,14$ and the vectors $\b w$ are chosen
using three different expected degree 
sequences: a constant sequence with 
$w_i = 4$, a random sequence where each $w_i$ is set by a pseudo-random generator uniform in $[0,8]$, and a power law distribution with $\gamma = 3$ and expected average degree $\mean(\b w) = 4$, obtained by the formula $w_i = 2\sqrt{n/i}$. 
In all experiments, the expected number of edges is $4n$, even though 
in practice that number varies, due to the randomness of the generator.
We denote by $m$ the actual number of edges in each graph.
In Figure \ref{fig:timings}, the $x$-axis corresponds to the number of edges in the graph,
illustrating that the execution time of the algorithm scales  essentially linearly with the number of edges.
The left picture shows the running time obtained by each experiment, while the right picture shows the average time per generated edge.  
The timings are the averages over $10$ runs per each dimension $n$ and degree profile. 

\subsection{Examples}

To illustrate the effectiveness of the construction devised in Theorem \ref{corollario}, hereafter
we show various statistics 
on random graphs built using Algorithm \ref{alg:1it} (as implemented in the code in Figure \ref{fig:code}) where we selected the expected degree sequence according to the formula in \eqref{eq:c_i_0}, with $n = 10000$, $d = 10$ and $M = \sqrt{dn/2}$, as per Corollary \ref{cor:M(n)}. The resulting vectors $\b w$ are admissible.
The other relevant parameters are reported in Table 1. 
The last column in that table reports the average degree $\frac1n\sum_{i=1}^nd_i$ averaged over $10$ realizations of $G\in G(\b w)$.

Figure \ref{fig:deg_3} displays both the expected (red line) and the actual (blue dots) node degrees of the graphs. The vertical dashed lines indicate the value of $i_0$. The relevance of that value is clearly visible: Nodes whose index is less than $i_0$ have comparable degrees, whereas the largest degree variation is produced by the remaining nodes. The point where the vertical line crosses the continuous red line sets up the scale of the largest hubs in the network.   
Figure \ref{fig:n_k_3} shows the degree profiles, that is, $n_k$ vs $k$ (blue dots), 
together with the expected power law (red line).
The abscissa of the vertical dashed line is $w_n$, which bounds from below the range where the degree distribution is expected to follow the power law. In fact, the number of nodes whose degree is smaller than $w_n$ departs 
ostensibly from the $\mathcal{O}(k^{-\gamma})$ behavior.
These small degrees are due to statistical fluctuations in nodes with very high indices, as the model specifies only the average values of the node degrees.

\begin{table}[t]   
\centering
\caption{Parameters for Figure \ref{fig:deg_3} and Figure \ref{fig:n_k_3}.}
\begin{tabular}{ccccc}
\toprule
$\gamma$ & $i_0$ & $w_n$ & $\mean(\b w)$ & avg.\ degree \\
\midrule
 $2.3$ & $25.1698$ & $2.3032$ & $7.4814$ & $7.5296$ \\
 $2.6$ & $13.4303$ & $3.7469$ & $9.1560$ & $9.1875$ \\
 $2.9$ & $5.5979$ & $4.7354$ & $9.7028$ & $9.7249$ \\
\bottomrule  
\end{tabular}
\end{table}

\begin{figure}[t]
    \centering
    \includegraphics[width=1\textwidth]{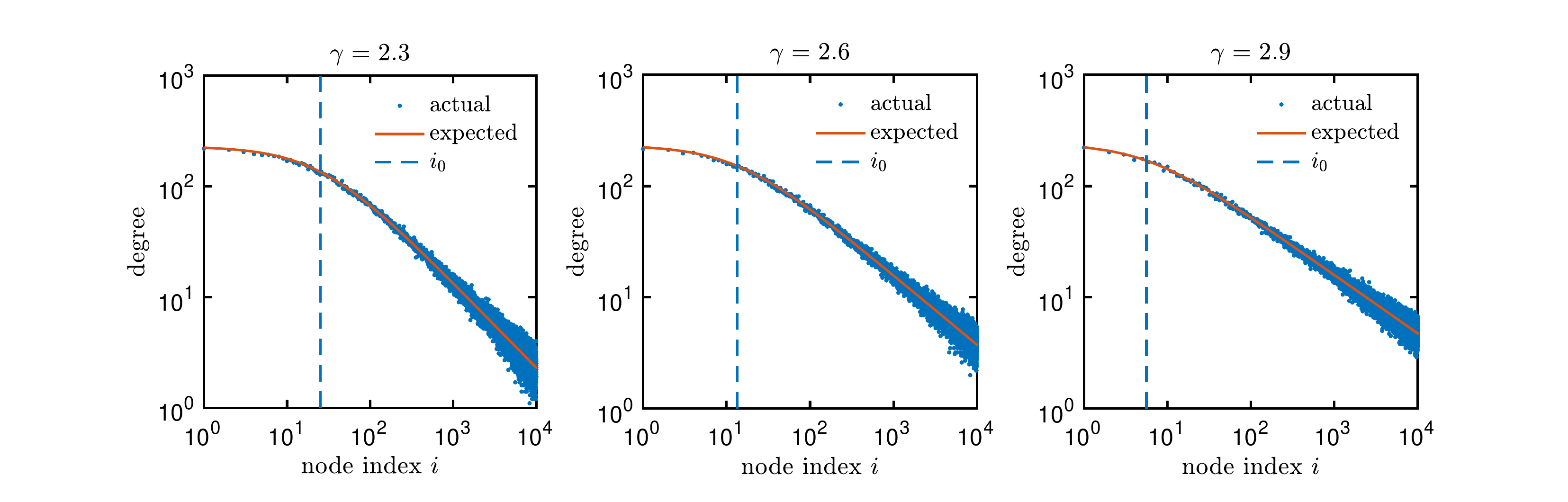}
     \caption{Actual vs expected node degrees of random graphs drawn from $G(\b w)$
     with $n = 10000$, $d = 10$, 
     $M = \sqrt{dn/2}$ and for three choices of $\gamma$.
     The values represented by blue dots are
     averages over 10 random instances. The vertical dashed lines show the value of $i_0$.}
    \label{fig:deg_3}
\end{figure}

\begin{figure}[t]
    \centering
    \includegraphics[width=1\textwidth]{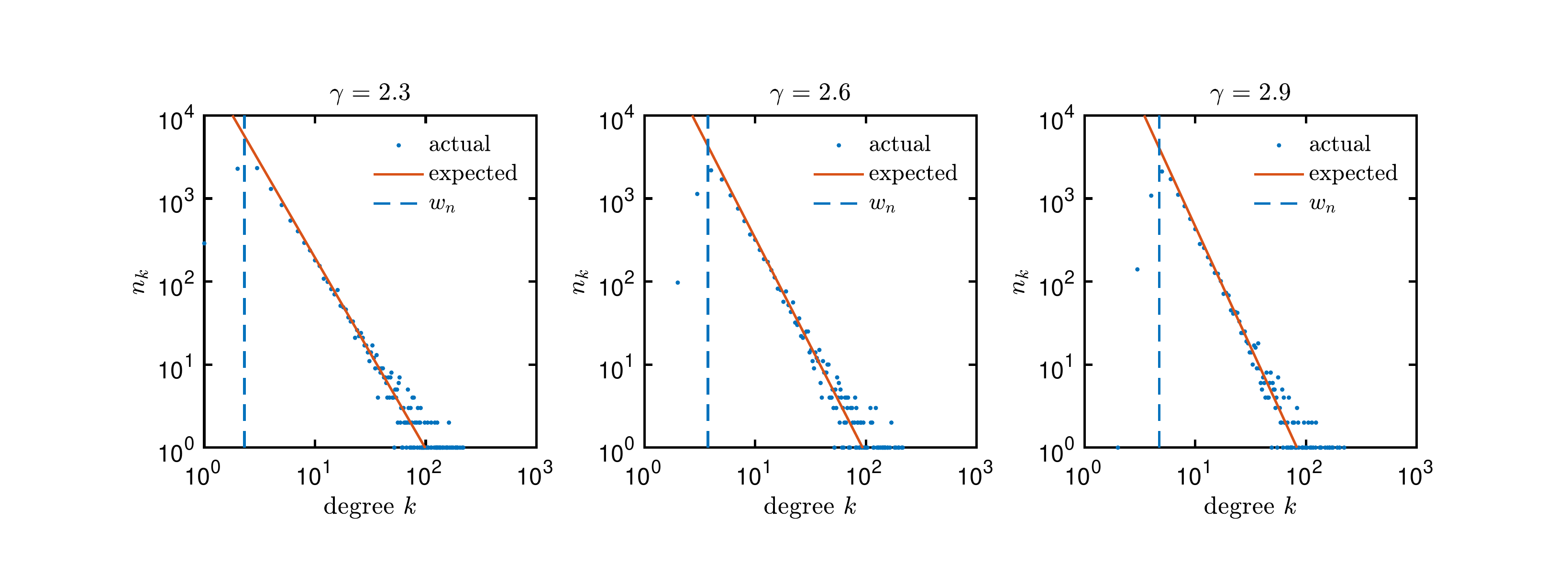}
     \caption{Actual vs expected degree profile $n_k$ of random graphs drawn from $G(\b w)$
     with $n = 10000$, $d = 10$, 
     $M = \sqrt{dn/2}$ and for three choices of $\gamma$.
     The values represented by blue dots are 
     averages over 10 random instances. The vertical dashed lines show the value of $w_n$.}
    \label{fig:n_k_3}
\end{figure}

\begin{figure}[t]
    \centering
    \includegraphics[width=1\textwidth]{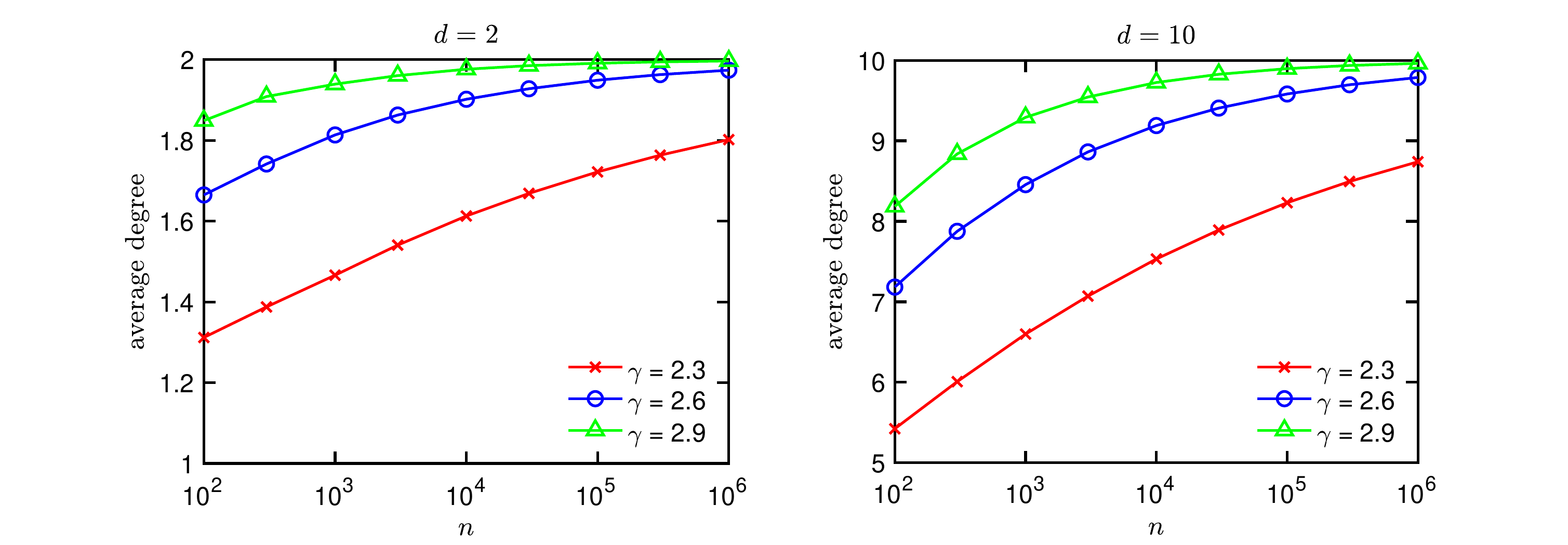} 
     \caption{Average degree in scale-free Chung--Lu random graphs
     of sizes growing from $10^2$ to $10^6$.
     Each color represent a different exponent for the power law: red crosses for $\gamma=2.3$, blue circles for $\gamma=2.6$ and green triangles for $\gamma=2.9$. Each point represents the average over $10$ different graphs
     with $d=2$
     (left) or $d = 10$ (right) and $M = \sqrt{dn/2}$.}
    \label{fig:average_degree}
\end{figure}

Finally, notice that, from equation \eqref{eq:averagedegree} it is clear that 
$\mean(\b w)$ converges towards $d(n)$ 
as $\mathcal{O}(\theta(n)^{1-p})$.
Then, if $p$ is approximately $1$, i.e., $\gamma$ is close to $2$, the convergence
can be very slow. 
This is illustrated in Figure \ref{fig:average_degree}, where we plot the average degree of different networks generated by the algorithm in Figure \ref{fig:code} as a function of the network dimension $n$. As we can see from the plot, as $n$ increases the average degree converges from below
to the parameter $d$ of the formula \eqref{eq:c_i_0}. 
The convergence is faster when $\gamma$ is
large and $d$ is small.

\begin{figure}[t]
    \centering
    \includegraphics[width=1\textwidth]{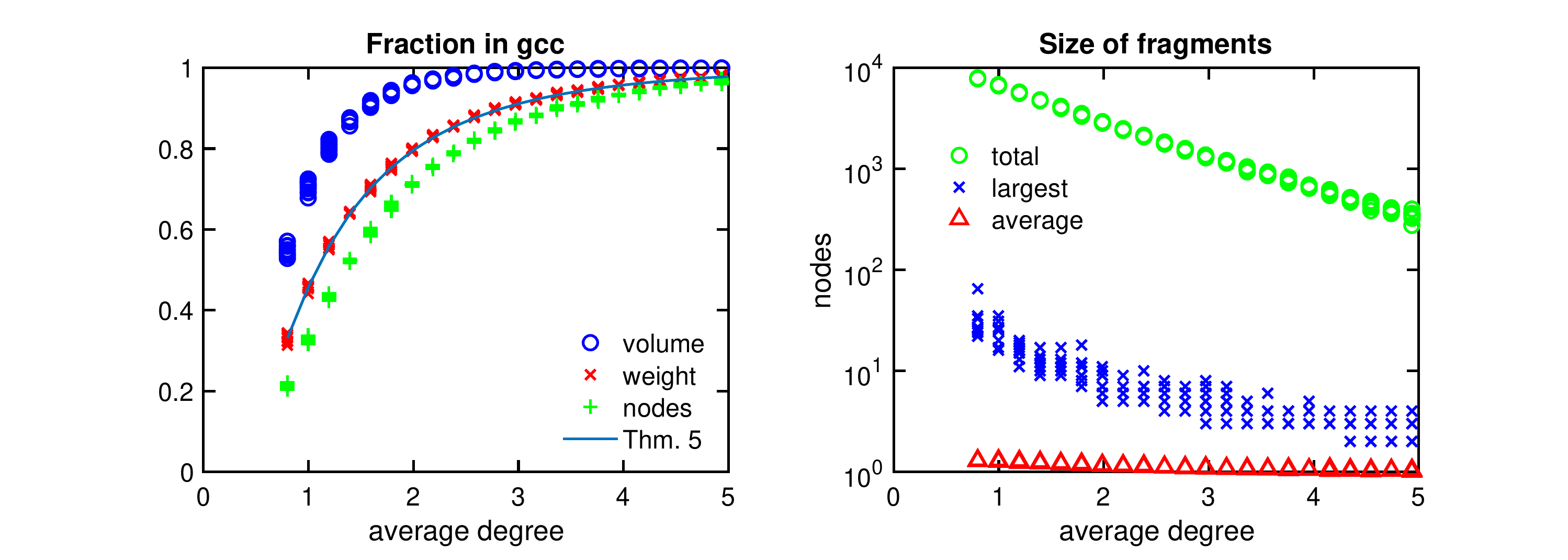}
     \caption{Statistics on the giant component (left) and the fragments (right) on scale-free Chung--Lu random graphs of size $n = 10000$ with $\gamma = 3$. The panel on the left shows the fraction of nodes $|S|/|V|$, volume $\sum_{i \in S}d_i/\sum_{i \in V}d_i$ and weight $\sum_{i \in S}w_i/\sum_{i \in V}w_i$ of the giant component $S$ as a function of the effective average degree $d\in \{0.8,1,1.2,1.4,...,4.8,5\}$.  The panel on the right shows the total number of nodes that are not in the giant component and the largest and average sizes of the connected components outside the giant component. }
    \label{fig:6}
\end{figure}

Finally, we present in Figure \ref{fig:6}
some statistics on the giant connected component (gcc) of scale-free Chung--Lu random graphs, to illustrate the content of Theorem \ref{thm:gcc}. 
For each $d = 0.8:0.2:5.0$ 
we generated 10 random graphs $G \in G(\b w)$ of size $n = 10000$ according to the construction in Corollary \ref{cor:M(n)}, with $\gamma = 3$ and $\mean(\b w) = d$. 
For each graph $G$ we computed the node subset $S$ comprising the giant component, measured the overall weight $\sum_{i\in S}w_i$ and the volume $\sum_{i\in S} d_i$, where $d_i$ is the effective degree of node $i$ in $G$.
The left panel of Figure \ref{fig:6} shows these numbers as fractions of the corresponding quantities in $G$, with respect to the effective average degree of $G$. The continuous line represents the value of $\lambda_0$ appearing in Theorem \ref{thm:gcc}.
Interestingly, while the actual overall weight of $S$ behaves precisely as expected by Theorem \ref{thm:gcc} and the effective average degree of $G$ closely mirrors $\mean(\b w)$,
the coefficient $\lambda_0$ overestimates the node fraction $|S|/n$ and largely underestimates the volume fraction $\sum_{i\in S}d_i/\sum_{i=1}^n d_i$. We can conclude that the giant component consists mainly of nodes whose effective degree $d_i$ is considerably larger than the corresponding weight $w_i$. 
This fact is rather exceptional. Indeed, the volume of an arbitrary subset of a Chung--Lu random graph is usually well approximated by its overall weight \cite{chunglubook,CL06}.

The right panel of Figure \ref{fig:6} displays data
about the fragments in $G$, that is, the connected subgraphs that remain after removing the giant component from $G$.
The total fragment size, that is,
the overall number of nodes that do not belong to $S$, is illustrated by the green circles. The blue crosses and the red triangles 
represent the largest and the average fragment size, respectively.
The picture shows that the average number of nodes outside the giant component decays exponentially as $d$ increases, and the fragments consist mainly of isolated nodes or very small-sized subgraphs.

\section{Conclusions}

We investigated the possibility of generating large random graphs having a power-law degree distribution using 
the Chung--Lu generative
model. This model is defined in terms of $n$ parameters, $w_1,\dots,w_n$,  where $n$ is the number of nodes, and is particularly relevant as it is the only random graph model that does not introduce correlations between the degrees of two nodes connected by an edge. 
Under the condition $\max_iw_i^2 \leq \sum_iw_i$, here called admissibility,
the parameters $w_i$ determine the node degrees, in expectation.

In this work, we analyzed the possibility of using admissible parameters $w_i$ to generate large networks having a
power-law degree distribution 
with arbitrary exponent $\gamma$.
Not surprisingly, the admissibility condition
imposes severe restrictions on the resulting degree sequence and, in some cases, also on the network size. 
In particular, we proved that, under appropriate 
hypotheses, 
the general formula $w_i = c(i_0+i)^{-p}$ can serve to generate 
arbitrarily large random graphs having a power-law degree distribution with exponent $\gamma = 1+1/p$,
at least within certain exponent and degree ranges.
Whilst our main focus is on the exponent range $2 < \gamma < 3$, 
which is the range 
often encountered in real-world networks,
we also considered rather general exponents.
Our main results provide explicit formulas for the coefficients $c$ and $i_0$, to fulfill some desirable requirements
on the connectivity of the network and the behavior of the largest and average degrees, in expectation. 

Furthermore, we proposed an algorithm to produce random graphs belonging to the Chung--Lu model, with arbitrary parameters $w_1,\dots,w_n$. The algorithm can be easily programmed in \matlab, Python,
and other high-level languages as well, without using explicit for-loops, which render it very fast. Together with the formulas  
alluded to above, this algorithm yields a flexible random graph generator for scale-free networks with general exponent $\gamma$. 
In practice, the observed running time for generating sparse networks up to million nodes is roughly proportional to the number of generated edges.

\subsection*{Funding}
The work of Dario Fasino and Arianna Tonetto has been carried out in the framework of the departmental research project ``ICON: Innovative Combinatorial Optimization in Networks'', Department of Mathematics, Computer Science and Physics (PRID 2017),
University of Udine, Italy. The work of Dario Fasino and Francesco Tudisco has been partly supported by 
Istituto Nazionale di Alta Matematica, INdAM-GNCS, Italy.

\end{document}